\title{Bidirectional Nested Weighted Automata}
\author[1]{Krishnendu Chatterjee}
\author[1]{Thomas A. Henzinger}
\author[2]{Jan Otop}
\affil[1]{IST Austria\\
\texttt{\{krish.chat,tah\}@ist.ac.at}}
\affil[2]{University of Wrocław\\
\texttt{jotop@cs.uni.wroc.pl}}
\newcommand{\Paragraph}[1]{\noindent{\textbf{#1}}}
\newcommand{\masterA}{{\cal A}_{mas}}
\newcommand{\nestedA}{\mathbb{A}}
\newcommand{\slaveA}{{\mathfrak{B}}}
\newcommand{\nonnestedA}{{\cal A}}
\newcommand{\silent}[1]{\mathsf{sil}({#1})}
\newcommand{\cost}{{C}}
\newcommand{\masterRun}{\Pi}
\newcommand{\slaveRun}{\pi}
\newcommand{\lang}{{\cal L}}
\newcommand{\valueL}[1]{{\cal L}_{{#1}}}
\newcommand{\abs}{\mathop{\mathsf{Abs}}}
\newcommand{\PTIME}{\textsc{PTime}{}}
\newcommand{\PSPACE}{\textsc{PSpace}{}}
\newcommand{\EXPSPACE}{\textsc{ExpSpace}{}}
\newcommand{\NLOGSPACE}{\textsc{NLogSpace}{}}
\newcommand{\N}{\mathbb{N}}
\newcommand{\Z}{\mathbb{Z}}
\newcommand{\tuple}[1]{\langle #1 \rangle}
\newcommand{\buchi}{B\"{u}chi}
\newcommand{\fsum}{\textsc{Sum}}
\newcommand{\fBsum}[1]{\textsc{Sum}^{L,U}}
\newcommand{\fmax}{\textsc{Max}}
\newcommand{\fmin}{\textsc{Min}}
\newcommand{\flimavg}{\textsc{LimAvg}}
\newcommand{\aut}{{\cal A}}
\newcommand{\Acc}{\mathsf{Acc}}
\newcommand{\InfVal}{\mathsf{InfVal}}
\newcommand{\pad}{\#}
\newcommand{\lpair}[2]{\langle {#1}, {#2} \rangle}
\newcommand{\conf}[1]{conf(#1)}
\newtheorem{theorem}{Theorem}
\newtheorem{example}[theorem]{Example}
\newtheorem{remark}[theorem]{Remark}
\newtheorem{lemma}[theorem]{Lemma}
\newtheorem{definition}[theorem]{Definition}
\begin{document}

\maketitle

\label{s:intro}

\begin{abstract}
Nested weighted automata (NWA) present a robust and convenient 
automata-theoretic formalism for quantitative specifications.
Previous works have considered NWA that processed input words only in the 
forward direction. 
It is natural to allow the automata to process input words backwards as well, 
for example, to measure the maximal or average time between a response and 
the preceding request.
We therefore introduce and study bidirectional NWA that can process input 
words in both directions.
First, we show that bidirectional NWA can express interesting quantitative 
properties that are not expressible by forward-only NWA. 
Second, for the fundamental decision problems of emptiness and universality, 
we establish decidability and complexity results for the new framework which 
match the best-known results for the special case of forward-only NWA.
Thus, for NWA, the increased expressiveness of bidirectionality is 
achieved at no additional computational complexity.
This is in stark contrast to the unweighted case, where bidirectional finite 
automata are no more expressive but exponentially more succinct than their 
forward-only counterparts.
\end{abstract}

\section{Introduction}

We study an extension of nested weighted automata (NWA)~\cite{nested}  that can process words in both directions. 
We show that this new and natural framework can express many interesting 
quantitative properties that the previous formalism could not.
We establish decidability and complexity results of the basic 
decision problems for the new framework.
We start with the motivation for quantitative properties, 
then describe NWA and our new framework, and finally the contributions.

\smallskip\noindent{\em Weighted automata}.
Automata-theoretic formalisms provide a natural way to express quantitative
properties of systems.
Weighted automata extend finite automata where every transition is
assigned an integer number called weight. 
Thus a run of an automaton gives rise to a sequence of weights.
A value function aggregates the sequence of weights into a single value.
For non-deterministic weighted automata, the value of a word
$w$ is the infimum value of all runs over~$w$.  
First, weighted automata were studied over finite words with weights 
from a semiring, and ring multiplication as value function~\cite{Droste:2009:HWA:1667106},
and later extended to infinite words with limit averaging or supremum as 
value function~\cite{Chatterjee08quantitativelanguages,DBLP:journals/corr/abs-1007-4018,Chatterjee:2009:AWA:1789494.1789497}. 
While weighted automata over semirings can express several 
quantitative properties~\cite{DBLP:journals/jalc/Mohri02}, they cannot
express long-run average properties that weighted automata with limit 
averaging can~\cite{Chatterjee08quantitativelanguages}.
However, even weighted automata with limit averaging cannot express 
some basic quantitative properties (see~\cite{nested}).

\smallskip\noindent{\em Nested weighted automata}.
A natural extension of weighted automata is to add nesting,
which leads to \emph{nested weighted automata (NWA)}~\cite{nested}. 
A nested weighted automaton consists of a master automaton and a set 
of slave automata. The master automaton runs over input infinite words.
At every transition the master can invoke a slave automaton that runs 
over a finite subword of the infinite word, starting at the position where 
the slave automaton is invoked.
Each slave automaton terminates after a finite number of steps and returns 
a value to the master automaton. 
Each slave automaton is equipped with a value function for finite words, 
and the master automaton aggregates the returned values from slave automata 
using a value function for infinite words.

\smallskip\noindent{\em Advantages of NWA}.
We discuss the various advantages of NWA.
\begin{compactenum}
\item For Boolean finite automata, nested automata are equivalent to the 
non-nested counterpart, whereas NWA are strictly more expressive than 
non-nested weighted automata~\cite[Example~5]{nested}.
It has been shown in~\cite{nested} that NWA provide a specification framework 
where many basic quantitative properties can be expressed, which 
cannot be expressed by weighted automata.

\item NWA provide a natural and convenient way to express quantitative 
properties.
Every slave automaton computes a subproperty,
which is then combined using the master automaton.
Thus NWA allow to decompose properties conveniently, and provide a natural 
framework to study quantitative run-time verification.

\item Finally, subclasses of NWA are equivalent in expressive power with 
automata with monitor counters~\cite{nested-sas}, and thus they provide a 
robust framework to express quantitative properties.
\end{compactenum}

\smallskip\noindent{\em Bidirectional NWA.}
Previous works considered slave automata that can only process input 
words in the forward direction (forward-only NWA). 
However, to specify quantitative properties, it is natural to allow slave automata to run 
backwards, for example, to measure the maximal or average time between a response and 
the preceding request.
In this work we consider this natural extension of NWA, namely 
{\em bidirectional NWA}, where slave automata can process words in the forward 
as well as the backward direction.

\smallskip\noindent{\em Natural properties.} 
First, we show that many natural properties can be expressed in the 
bidirectional NWA framework. 
We present two examples below (details in Section~\ref{s:examples}).
\begin{compactenum}
\item {\em Average energy level.} Consider a quantitative setting where each weight 
represents energy gain or consumption, and thus the sum of weights represents 
the energy level. 
To express the average energy level property, the master automaton has 
long-run average as the value function, and at every transition it invokes
a slave automaton that walks backward with sum value function for the weights. 
Thus the average energy level property is naturally expressed by NWA with 
backward-walking slave automata, while  this property is not expressible
by NWA with forward-walking slave automata.

\item {\em Data-consistency property (DCP).} 
Consider the data-consistency property (DCP) where the input letters correspond to 
reads, writes, null instructions, and commits.
For each read, the distance to the previous commit measures how fresh 
is the read with respect to the last commit, and this can be measured with 
a backward-walking slave automaton.
For each write, the distance to the next commit  measures how fresh is 
the write with respect to the following commit, and this can be measured with 
a forward-walking slave automaton.
Thus the average freshness, called DCP,  is expressed 
with bidirectional NWA.
Moreover, the DCP can neither be expressed by NWA with only forward-walking 
slave automata nor by NWA with only backward-walking slave automata.

\end{compactenum}

\smallskip\noindent{\em Our contributions.} 
We propose bidirectional NWA as a specification framework for quantitative 
properties.
First, we show that the classes of forward-only NWA and backward-only NWA 
have incomparable expressiveness, and bidirectional NWA strictly generalize 
both classes. 
Second, we establish complexity of the emptiness and universality problems 
for bidirectional NWA, where we consider the limit-average value function 
for the master automaton and for the slave automata we consider standard 
value functions for finite words (such as min, max, and variants of sum).
The obtained complexity results coincide with the results for forward-only NWA,
and range from $\NLOGSPACE$-complete, $
\PTIME$ to $\PSPACE$-complete to $\EXPSPACE$.
However the proofs for bidirectional NWA are much more involved than forward-only
NWA.
Thus bidirectional NWA have all the advantages of NWA but provide a more expressive 
framework for natural quantitative properties. Moreover, the added expressiveness of 
bidirectionality is achieved with no increase in the computational complexity 
of the decision problems~(Table~\ref{tab:complexity}).
We highlight two significant differences as compared to the unweighted case:
(1)~In the unweighted case bidirectionality does not change expressiveness,
whereas we show for NWA it does; and 
(2)~in the unweighted case for deterministic automata bidirectionality leads to
exponential succinctness and increase in complexity of the decision problems,
whereas for NWA bidirectionality does not change the computational complexity.
Thus the combination of nesting and bidirectionality is very interesting in 
the weighted automata setting, which we study in this work.

\smallskip\noindent{\em Related works.}
Quantitative automata and logic have been extensively studied in recent years
in many different contexts~\cite{Droste:2009:HWA:1667106,Chatterjee08quantitativelanguages,boundsInWRegularity,DBLP:journals/jacm/AlmagorBK16}.
The book~\cite{Droste:2009:HWA:1667106} presents an excellent collection of results
of weighted automata on finite words. 
Weighted automata on infinite words have been studied in~\cite{Chatterjee08quantitativelanguages,DBLP:journals/corr/abs-1007-4018,DrosteR06}.
Weighted automata over finite words extended with monitor counters  have been considered (under the name of 
cost register automata) in~\cite{DBLP:conf/lics/AlurDDRY13,copylessCRA}.
A version of nested weighted automata over finite words has been 
studied in~\cite{bollig2010pebble}, and nested weighted automata over 
infinite words has been studied in~\cite{nested,nestedprob,nested-mfcs}.
Several quantitative logics have also been studied, such as~\cite{BokerCHK14,BouyerMM14,AlmagorBK14}. 
However, none of these works consider the rich and expressive formalism of quantitative
properties expressible by NWA with slaves that walk both forward and backward, 
retaining decidability of the basic decision problems.

In the main paper, we present the key ideas and main intuitions of the proofs of selected results,
and detailed proofs are relegated to the appendix.
 
\section{Definitions}
\label{s:definition}
\newcommand{\run}{\pi}
\newcommand{\blank}{\texttt{\#}}

\subsection{Words and automata}

\Paragraph{Words}.
We consider a finite \emph{alphabet} of letters $\Sigma$.
A \emph{word} over $\Sigma$ is a (finite or infinite) sequence of letters from $\Sigma$.
We denote the $i$-th letter of a word $w$ by $w[i]$, and for $i < j$ we
define $w[i,j]$ as the word $w[i] w[i+1] \ldots w[j]$.
The length of a finite word $w$ is denoted by $|w|$; and the length of an infinite word 
$w$ is $|w| = \infty$.
For an infinite word $w$, word $w[i,\infty]$ is the suffix of $w$ 
with first $i-1$ letters removed.
For a finite word $w$ of length $k$, we define the reverse of $w$, denoted by $w^R$,
as the word $w[k] w[k-1] \ldots w[1]$.

\Paragraph{Labeled automata}. For a set $X$, an \emph{$X$-labeled automaton} $\aut$ is a tuple
$\tuple{\Sigma, Q, Q_0, \delta, F, \cost}$, where
(1)~$\Sigma$ is the alphabet, 
(2)~$Q$ is a finite set of states, 
(3)~$Q_0 \subseteq Q$ is the set of initial states, 
(4)~$\delta \subseteq Q \times \Sigma \times Q$ is a transition relation,
(5)~$F$ is a set of accepting states,
and 
(6)~$\cost : \delta \mapsto X$ is a labeling function.
A labeled automaton $\tuple{\Sigma, Q, \{q_0\}, \delta, F, \cost}$ is 
\emph{deterministic} if and only if 
$\delta$ is a function from $Q \times \Sigma$ into $Q$ 
and $Q_0$ is a singleton. 

\Paragraph{Semantics of (labeled) automata}. 
A \emph{run} $\run$ of a (labeled) automaton $\aut$ on a word $w$ is a sequence of states
of $\aut$ of length $|w|+1$  
such that $\run[0]$ belongs to the initial states of $\aut$
and for every $0 \leq i \leq |w|-1$ we have $(\pi[i], w[i+1], \pi[i+1])$  is a transition of $\aut$.
A run $\pi$ on a finite word $w$ is \emph{accepting} if and only if the last state $\pi[|w|]$ of the run 
is an accepting state of $\aut$.
A run $\pi$ on an infinite word $w$ is \emph{accepting} if and only if some accepting state of $\aut$ occurs
infinitely often in $\pi$. 
For an automaton $\aut$ and a word $w$, we define $\Acc(w)$ as the set of accepting runs on $w$.
Note that for deterministic automata, every word $w$ has at most one accepting run ($|\Acc(w)| \leq 1$).

\Paragraph{Weighted automata and their semantics}.
A \emph{weighted automaton} is a $\Z$-labeled automaton, where $\Z$ is the set of integers. 
The labels are called \emph{weights}. 
We define the semantics of weighted automata in two steps. First, we define the value of a 
run. Second, we define the value of a word based on the values of its runs.
To define values of runs, we will consider  \emph{value functions} $f$ that 
assign real numbers to sequences of integers.
Given a non-empty word $w$, every run $\pi$ of $\aut$ on $w$ defines a sequence of weights 
of successive transitions of $\aut$, i.e., 
$\cost(\pi)=(\cost(\pi[i-1], w[i], \pi[i]))_{1\leq i \leq |w|}$; 
and the value $f(\pi)$ of the run $\pi$ is defined as $f(\cost(\pi))$.
We denote by $(\cost(\pi))[i]$ the weight of the $i$-th transition,
i.e., $\cost(\pi[i-1], w[i], \pi[i])$.
The value of a non-empty word $w$ assigned by the automaton $\aut$, denoted by  $\valueL{\aut}(w)$,
is the infimum of the set of values of all {\em accepting} runs;
i.e., $\inf_{\pi \in \Acc(w)} f(\pi)$, and we have the usual semantics that the infimum of the
empty set is infinite, i.e., the value of a word that has no accepting run is infinite.
Every run $\pi$ on the empty word has length $1$ and the sequence $\cost(\pi)$ is empty, hence 
we define the value $f(\pi)$ as an external (not a real number) value $\bot$. 
Thus, the value of the empty word is either $\bot$, if the empty word is accepted by $\aut$, or $\infty$ 
otherwise.
To indicate a particular value function $f$ that defines the semantics,
we call a weighted automaton $\aut$ with value function $f$ an $f$-automaton. 

\Paragraph{Value functions}.
For finite runs we consider the following classical value functions: for runs of length $n+1$ we have
\begin{compactitem}
\item {\em Max and min:} 
$\fmax(\pi) = \max_{i=1}^n (\cost(\pi))[i]$ and 
$\fmin(\pi) = \min_{i=1}^n (\cost(\pi))[i]$.
\item \emph{Sum and absolute sum:} the sum function 
$\fsum(\pi) = \sum_{i=1}^{n} (\cost(\pi))[i]$, 
the absolute sum 
$\fsum^+(\pi) = \sum_{i=1}^{n} \abs((\cost(\pi))[i])$, where $\abs(x)$  is the absolute value of $x$.
\item \emph{Variants of bounded sum:} we consider a family of functions called the (variant of) bounded sum value function $\fBsum{B}$. 
Each of these functions returns the sum if all the 
partial sums are in the interval $[L,U]$, otherwise there are many possibilities which lead to multiple variants.
For example, we can require that for all prefixes $\pi'$ of $\pi$ we have $\fsum(\pi') \in [L,U]$. We can impose a similar restriction on all suffixes, all infixes etc.
Moreover, if partial sums are not contained in $[L,U]$, a bounded sum can return $\infty$, the first violated bound,  etc.
\end{compactitem}
For infinite runs we consider:
\begin{compactitem}
\item {\em Limit average:} $\flimavg(\pi) = \liminf\limits_{k \rightarrow \infty} \frac{1}{k} \cdot \sum_{i=1}^{k} (\cost(\pi))[i]$.
\end{compactitem}

\Paragraph{Silent moves}. Consider a $(\Z \cup \{ \bot\})$-labeled automaton. We consider such an automaton as an extension
of a weighted automaton in which transitions labeled by $\bot$ are \emph{silent}, i.e., they do not contribute to 
the value of a run. Formally, for every function $f \in \InfVal$ we define
$\silent{f}$ as the value function that applies $f$ on sequences after removing $\bot$ symbols.
The significance of silent moves is as follows: it allows to ignore transitions, and thus provides
robustness where properties could be specified based on desired events rather than steps.

\subsection{Nested weighted automata}
Nested weighted automata (NWA)  have been introduced in~\cite{nested} and originally
allowed slave automata to move only forward. The variant we define here allow two types of slave automata, forward walking and backward walking. 
The original definition of NWA from~\cite{nested} is versatile and hence it
can be seamlessly extended to the case with bidirectional (forward- and backward-walking) slave automata.
We follow the description of~\cite{nested}. 

\smallskip\noindent{\em Informal description.}
A \emph{nested weighted automaton} consists of a labeled automaton over infinite words, 
called the \emph{master automaton}, a value function $f$ for infinite words,  
and a set of weighted automata over finite words, called \emph{slave automata}. 
A nested weighted automaton can be viewed as follows: 
given a word, we consider the run of the master automaton on the word,
but the weight of each transition is determined by dynamically running 
slave automata; and then the value of a run is obtained using the 
value function $f$.
That is, the master automaton proceeds on an input word as an usual automaton, 
except that before taking a transition, it starts a slave automaton 
corresponding to the label of the current transition. 
The slave automaton starts at the current position of the master automaton in the input word
and works on some finite part of it. 
There are two types of slave automata: (a)~forward walking, which move onward the input word (toward higher positions), and
(b)~backward walking, which move towards the beginning of the input word.
Once a slave automaton finishes,
it returns its value to the master automaton, which treats the returned
value as the weight of the current transition that is being executed.
The slave automaton might immediately accept and return value $\bot$,
which corresponds to a \emph{silent transition}, i.e., transition with no weight.
If one of slave automata rejects, the nested weighted automaton rejects.
We present two examples of properties expressible by NWA. 
Additional examples are presented in Section~\ref{s:examples}.

\begin{example}[Average response time and its dual]
\label{ex:ART}
\label{ex:dual-ART}
Consider infinite words over $\{r,g,\#\}$, where $r$ represents 
requests, $g$ represents grants, and $\#$ represents idle. 
A basic and interesting property is the average number of letters
between a request and the corresponding grant, which represents the 
long-run {\em average response time (ART)} of the system.
This property cannot be expressed by a non-nested automaton~\cite{nested}.
ART can be expressed by a deterministic nested weighted automaton, 
which basically implements the definition of ART.
This automaton invokes at every request a forward-walking slave automaton with $\fsum^+$ value function, which counts the number of events until the following grant. 
On the other events the NWA takes silent transitions.
Finally, the master automaton applies $\flimavg$ value function to the values returned by slave automata.
Figure~\ref{fig:ARTAW} presents a run of the NWA computing ART.

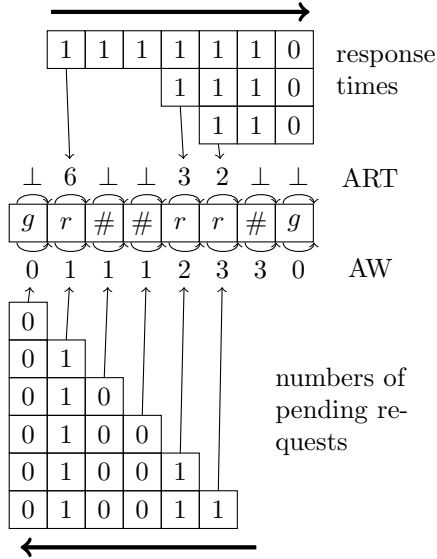
\begin{figure}
\centering
\begin{tikzpicture}

\foreach \i/\l in {1/g,2/r,3/\pad,4/\pad,5/r,6/r,7/\pad,8/g}
{
   \node[rectangle,minimum height=0.5cm, minimum width=0.5cm,draw,inner sep=0pt] (A\i) at (\i*0.5,0) {$\l$};
}

\foreach \x/\y/\w in {1/\bot,2/6,3/\bot,4/\bot,5/3,6/2,7/\bot,8/\bot}
{
\draw[out=90,in=90,->] (A\x.north west)+(0.11,0) to node[above] (B\x) {$\w$} (A\x.north east);
}

\foreach \x/\y/\l in {2/5/1,3/5/1,4/5/1,5/5/1,6/5/1,7/5/1,8/5/0,5/4/1,6/4/1,7/4/1,8/4/0,6/3/1,7/3/1,8/3/0}
{
   \node[rectangle,minimum height=0.5cm, minimum width=0.5cm,draw,inner sep=0pt] at (\x*0.5,\y*0.5-0.2) {$\l$};
}

\foreach \x/\y in {2/5,5/4,6/3}
{
 \draw[->] (\x*0.5,\y*0.5-0.45) to (B\x);
}

\foreach \x/\y/\w in {1/0,2/1,3/1,4/1,5/2,6/3,7/3,8/0}
{
\draw[out=270,in=270,->] (A\x.south west)+(0.11,0) to node[below] (C\x) {$\w$} (A\x.south east);
}

\foreach \x/\y/\l in {1/2/0,2/3/1,3/4/0,4/5/0,5/6/1,6/7/1}
{
  \foreach \i in {\y,...,7}  
  {
   \node[rectangle,minimum height=0.5cm, minimum width=0.5cm,draw,inner sep=0pt] at (\x*0.5,-\i*0.5-0.3) {$\l$};
   }
}

\foreach \x/\y in {1,...,6}
{
 \draw[->] (\x*0.5,-\x*0.5-0.55) to (C\x);
}

\node at (5.0, 0.6) {ART};
\node at (5.0,-0.6) {AW};

\node[text width=1.5cm] at (5.3,2.0) {response times};

\node[text width=2.5cm]  at (5.0,-2.5) {numbers of pending requests};

\draw[ultra thick,->] (3.5,-4.3) to (0.4,-4.3);
\draw[ultra thick,->] (0.8,2.8) to (4.2,2.8);

\end{tikzpicture}
\caption{Runs of NWA computing ART (above) and AW (below). Each weight of a transition is dynamically computed as the sum of weights of slave automata. The thick arrows depict directions of slave automata. }
\label{fig:ARTAW}
\end{figure}

We define the average workload property (AW), which measures
the average number of pending requests. The average is computed over all positions in a word. 
Intuitively, if we pick a position in word $w$ at random, the expected number of pending requests is the average workload of~$w$.
Formally, we define the workload at $i$ in $w$, denoted $wl(w,i)$, as the number of letters $r$ among $w[j,i]$, where $j$ is the last 
position in $w[1,i]$ where $g$ occurs or $1$ if such a position does not exist.
The average workload of $w$ is the limit average over all positions $i$ of $wl(w,i)$.

AW can be expressed by a deterministic $(\flimavg;\fsum^+)$-automaton with backward-walking slave automata. 
Basically, the NWA invokes at every position a slave automaton, which counts the number of $r$ letter from its current position to the first  position containing letter $g$, where it terminates. Since  slave automata run backwards, each of them computes the workload at the position of its invocation.
Figure~\ref{fig:ARTAW} presents a run of the NWA computing AW.
\end{example}

Now, we present a formal definition of NWA and their semantics.

\Paragraph{Nested weighted automata}. 
A \emph{nested weighted automaton} (NWA) with bidirectional slave automata is a tuple $\tuple{\masterA; f; \slaveA_{-m},\ldots, \slaveA_0, \ldots, \slaveA_l}$, with $m,l \in \N$ where
(1)~$\masterA$, called the \emph{master automaton}, is a $\{-m, \ldots, l\}$-labeled automaton over infinite words 
(the labels are the indexes of automata $\slaveA_{-m},  \ldots, \slaveA_l$), 
(2)~$f$ is a value function on infinite words, called the \emph{master value function}, and
(3)~$\slaveA_{-m}, \ldots, \slaveA_l$ are weighted automata over finite words called \emph{slave automata}.
Intuitively, an NWA can be regarded as an $f$-automaton whose weights are dynamically computed at every step by the corresponding slave automaton.
The automata $\slaveA_{-m}, \ldots, \slaveA_{-1}$ (resp., $\slaveA_{1}, \ldots, \slaveA_l$) are called \emph{backward walking} 
(resp., \emph{forward walking}) slave automata.
We refer to NWA with both forward and backward walking slave automata as 
{\em bidirectional NWA}. 
The automaton $\slaveA_0$ immediately accepts and returns no weight; it is used to implement silent transitions, which have no weight.
We define an \emph{$(f;g)$-automaton} as an NWA where the master value function is $f$ and all slave automata are $g$-automata.

\Paragraph{Semantics: runs and values}.
A \emph{run} of $\nestedA$ on an infinite word $w$ is an infinite sequence 
$(\masterRun, \slaveRun_1, \slaveRun_2, \ldots)$ such that 
(1)~$\masterRun$ is a run of $\masterA$ on $w$;
(2)~for every $i>0$ the label $j = \cost(\masterRun[i-1], w[i], \masterRun[i])$ pointers at a slave automaton and 
(a)~if $j < 0$, then $\slaveRun_i$ is a run of the automaton $\slaveA_j$ on some prefix of the reverse word $(w[1,i])^R$, and
(b)~if $j \geq 0$, then $\slaveRun_i$ is a run of the automaton $\slaveA_j$ on some finite prefix of $w[i,\infty]$.
The run $(\masterRun, \slaveRun_1, \slaveRun_2, \ldots)$ is \emph{accepting} if all 
runs $\masterRun, \slaveRun_1,  \slaveRun_2, \ldots$ are accepting (i.e., $\masterRun$ satisfies its acceptance 
condition and each $\slaveRun_1,\slaveRun_2, \ldots$ ends in an accepting state)
and infinitely many runs of slave automata have length greater than $1$ (the master automaton takes infinitely many non-silent transitions).
The value of the run $(\masterRun, \slaveRun_1, \slaveRun_2, \ldots)$ is defined as 
$\silent{f}( v(\pi_1) v(\pi_2) \ldots)$, where $v(\pi_i)$ is the value of the run $\pi_i$ in 
the corresponding slave automaton, and $\silent{f}$ is the value function that takes its input sequence, removes 
$\bot$ symbols and applies $f$ to the remaining sequence.
The value of a word $w$ assigned by the automaton $\nestedA$, denoted by  
$\valueL{\nestedA}(w)$, is the infimum of the set of values of all {\em accepting} runs.
We require accepting runs to contain infinitely many non-silent transitions
as $f$ is a value function over infinite sequences, hence the sequence 
$v(\pi_1) v(\pi_2) \ldots$ with $\bot$ removed must be infinite.

\Paragraph{Deterministic nested weighted automata}. An NWA $\nestedA$ is \emph{deterministic} if (1)~the master automaton 
and all slave automata are deterministic, and (2)~in all slave automata, accepting states have no outgoing transitions. 
Intuitively, a slave automaton in an accepting state can choose (non-deteministically) to terminate or continue running; condition (2) removes this source of non-determinism.

\Paragraph{Width of NWA}. 
An NWA has \emph{width} $k$ if and only if in every  run at every position at most $k$ slave automata are active.

\section{Examples}
\label{s:examples}

In this section we present several examples of quantitative properties that
can be expressed with bidirectional NWA.

\begin{example}[Average energy level]
\label{ex:energy}
We consider the average energy level property studied in~\cite{DBLP:journals/tac/ChatterjeeP15,DBLP:journals/corr/BouyerMRLL15}.
Consider $W \in \N$ and an alphabet $\Sigma_W$ consisting of integers from 
interval $[-W,W]$. 
These letters correspond to the energy change, i.e.,
negative values represent energy consumption whereas positive values 
represent energy gain. 
For $w \in \Sigma_W$ we define the energy level at $i$ as the sum 
$w[1] + \ldots + w[i]$.
The average energy property (AE) is the limit average of the energy levels 
at every position. 
For example, the average energy level of $2 (-1) 3 ((-1) 1)^{\omega}$ is $4$.

AE can be expressed by a $(\flimavg;\fsum)$-automaton $\nestedA$ with 
backward-walking slave automata, but it is not expressible by 
$(\flimavg;\fsum)$-automata with forward-walking slave automata. 
To express AE, a $(\flimavg;\fsum)$-automaton $\nestedA$ with backward-walking 
slave automata invokes at every position a slave automaton, which
runs backward to the beginning of the word and sums up all the letters. 
In contrast, $(\flimavg;\fsum)$-automata with forward-walking slave automata 
 can use finite memory of the master automaton, but finite prefixes influence only finitely many values returned by  slave slave automata and the limit-average value function neglects finite prefixes. 
Formally, we can show with a simple pumping argument that for every  $(\flimavg;\fsum)$-automaton with 
forward-walking slave automata, among words $w_i = 1^i 0^{\omega}$ there exists a pair of words with 
the same value. In contrast, all these words have different AE (AE of $w_i$ is $i$).

AE property is often considered in conjunction with bounds on energy values. 
Typically, energy should not drop below some threshold, in particular, it should not be negative. 
In addition, the energy storage is limited, which motivates the upper bound on the stored energy, where the excess energy is released.  
These two restrictions lead to the \emph{interval constraint} on energy levels, i.e., we require the 
energy level at every position to belong to a given interval $[L,U]$, which results in a 
variant of the bounded sum $\fBsum{L,U}$.
\end{example}

\begin{example}[Data consistency]
\label{ex:data-consistency}
Consider a database server, which processes instructions grouped into transactions. 
There are four instructions: read $r$, write $w$, void $\#$ and commit $c$. 
The commit instruction applies all writes, finishes the current transaction and starts a new one. 
The read instructions refer to writes applied before the previous commit. 

In the presence of multiple clients connected to the database, there are two options to achieve consistency. 
One option is to use locks that can limit concurrency. A second approach is \emph{optimistic concurrency} which proceeds without locks, 
and then rolls back in case there was a collision between transactions.
In ordered to limit the number of roll backs, it is preferred that the read instructions occur shortly after commit, 
while write instructions are followed by the commit instruction as quickly as possible.
Formally, we define (a)~consistency (or freshness) of a read instruction as the number of steps to 
the first preceding commit instruction, and (b)~consistency of a write instruction as the number of steps to the following commit instruction.
The data consistency property (DCP) of $w$ is the limit average of consistency of reads and writes in $w$.

DCP is expressed by the following deterministic $(\flimavg;\fsum^+)$-automaton $\nestedA$ with bidirectional slave automata. 
On every read $r$ (resp., $w$), the NWA $\nestedA$ invokes a slave automaton which walks backward (resp., forward) and counts 
the number of steps to the first encountered $c$.
On the remaining instructions $c,\#$, the NWA $\nestedA$ invokes a dummy slave automaton which corresponds to a silent transition.
\end{example}

\begin{example}
\label{ex:regret}
Consider the framework of Example~\ref{ex:data-consistency}. 
For every position with read $r$ or write $w$ we define a regret at position $i$ as the minimal distance to 
the preceding or the following commit $c$. 
Intuitively, the regret corresponds to the number of instructions by which we have to prepone or postpone 
the commit to include the instruction at the current position. 
We consider the minimal regret property (MR) on words over $\{ r,w,c,\#\}$ defined the limit average over 
positions with $r$ and $w$ of the regret at these positions. 
MR can be expressed by a non-deterministic $(\flimavg;\fsum^+)$-automaton with bidirectional slave automata, 
which basically implements the definition of MR (the non-deterministic guess is whether it is the preceding or
the following commit). 
The NWA invokes at every $r$ or $w$ position one 
of the following two slave automata $\slaveA_B, \slaveA_F$. 
The automaton $\slaveA_B$ counts the number of steps to the preceding grant, while 
$\slaveA_F$ counts the number of steps to the following grant.
\end{example}
 
\section{Decision questions}
\label{s:decision}
For NWA with bidirectional slave automata, we consider the quantitative counterparts of the fundamental problems of emptiness and universality.
The (quantitative) emptiness and universality problems are defined in the same way for weighted automata and all variants of NWA;
in the following definition $\aut$ denotes either a weighted automaton or an NWA.
\smallskip

\Paragraph{Emptiness and universality}.
Given an automaton $\aut$ and a threshold $\lambda$, the {\em emptiness} (resp. {\em universality}) problem asks 
whether there exists a word $w$ with $\lang_\aut(w) \leq \lambda$
(resp., for every word $w$ we have $\lang_\aut(w) \leq \lambda$).
\smallskip

\begin{remark}\label{rem:decision}
The emptiness and universality problems have been studied for forward-only NWA in~\cite{nested}.
\begin{itemize}
\item For NWA the value functions considered for the master automaton are 
the infimum (or limit-infimum), the supremum (or limit-supremum), and the limit-average. 
For all the decidability results for the infimum (limit-infimum) and the supremum (limit-supremum) 
value functions the techniques are similar to unweighted automata~\cite{nested}, which can be easily 
adapted to the bidirectional framework. 
Hence in the sequel we only focus on bidirectional NWA with the limit-average value function for the 
master automaton.

\item Moreover, we study only the emptiness problem for the following reasons.
First, for the deterministic case the emptiness and the universality problems are similar and hence we focus on the emptiness problem.
Second, in the non-deterministic case the universality problem is already undecidable for $\flimavg$-automata even 
with no nesting~\cite{DBLP:journals/corr/abs-1006-1492}.
\end{itemize}
\end{remark}

\subsection{The minimum, maximum and bounded sum value functions}

First, we show that 
for $g$ being $\fmin, \fmax$, or a variant of the bounded sum value function $\fBsum{B}	$, the emptiness problem for $(\flimavg;g)$-automata with bidirectional slave automata is decidable in $\PSPACE$. 
To show that, we prove a stronger result, i.e., every $(\flimavg;g)$-automaton can be effectively transformed to a $\flimavg$-automaton of exponential size.
\smallskip

\noindent\emph{Key ideas}. Weighted automata with value functions $\fmin,\fmax,\fBsum{B}$ are close to (non-weighted) finite-state automata.
In particular, these automata have finite range and for each value $\lambda$ from the range, the set of words of value $\lambda$ is regular. 
Thus, instead of invoking a slave automaton, the master automaton can non-deterministically pick value $\lambda$ and verify that the value returned by this slave automaton is $\lambda$.
For backward-walking slave automata the guessing can be avoided as the master automaton can simulate (the reverse of) runs of all backward-walking slave automata until the current position.
Thus, we can eliminate slave automata from NWA, i.e., we transform such NWA to weighted automata.
Formally, we show that for $g \in \{\fmin,\fmax,\fBsum{B}\}$, 
every $(\flimavg;g)$-automaton with bidirectional slave automata can be transformed into an equivalent $\flimavg$-automaton of exponential size.
The emptiness problem for non-deterministic $\flimavg$-automata is in $\NLOGSPACE$  (assuming weights given in unary) and hence 
we have the containment part in the following Theorem~\ref{th:regularBF}. The hardness part follows from $\PSPACE$-hardness of the emptiness problem for 
 $(\flimavg;g)$-automata with forward-walking slave automata only~\cite{nested}.

\begin{restatable}{theorem}{RegularForwardAndBackward}
Let $g \in \{\fmin,\fmax,\fBsum{B}\}$.
The emptiness problem for non-deterministic $(\flimavg;g)$-automata with bidirectional slave automata is $\PSPACE$-complete.
\label{th:regularBF}
\end{restatable}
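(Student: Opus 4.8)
The plan is to follow the key-ideas paragraph and reduce every $(\flimavg;g)$-automaton with bidirectional slave automata, for $g\in\{\fmin,\fmax,\fBsum{B}\}$, to an equivalent $\flimavg$-automaton of exponential size; the upper bound then follows from the $\NLOGSPACE$ emptiness procedure for $\flimavg$-automata, and the matching lower bound is inherited from the $\PSPACE$-hardness already known for the forward-only case~\cite{nested}. The foundational observation I would isolate first is that for these three value functions each slave automaton $\slaveA_j$ has a \emph{finite range} $R_j\subseteq\Z\cup\{\bot\}$, and moreover, for each target value $\lambda\in R_j$, the set of finite words on which $\slaveA_j$ returns exactly $\lambda$ is regular. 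For $\fmin,\fmax$ this is clear since the returned value is one of the finitely many weights on transitions; for the bounded-sum variants the range is bounded by the interval $[L,U]$ (up to the finitely many escape values such as $\infty$ or a violated bound), and regularity holds because a deterministic finite automaton can track the running partial sum within $[L,U]$ as part of its control state. I would state and prove this as a preliminary lemma, since both the forward and backward constructions rest on it.

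The two directions are then handled by different mechanisms, which I would develop separately and then combine. For \emph{backward-walking} slaves I would exploit determinism-by-reversal: the master, as it scans $w$ left to right, can maintain in its finite control the set of states reachable (in the reversed sense) by every backward slave that has been invoked at some earlier position and has not yet terminated. Concretely, I would build a subset-style product in which the master's state records, for each backward slave index, the configuration of that slave run reaching the current position; when the master takes a transition whose label invokes a backward slave, the new slave is seeded, and at each step every pending backward slave advances one position backward relative to its invocation, i.e.\ forward in the scan. Since width is bounded and ranges are finite, this tracking is finite and the exact returned value is determined without any guessing. For \emph{forward-walking} slaves I would instead use nondeterminism as suggested: at the invoking transition the master guesses the value $\lambda\in R_j$ that $\slaveA_j$ will return, emits $\lambda$ as the weight of the corresponding $\flimavg$-transition, and launches a verifier component—built from the regular language of words on which $\slaveA_j$ returns $\lambda$—that checks the guess against the suffix actually read. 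The acceptance condition of the resulting $\flimavg$-automaton must ensure every such verifier eventually accepts and that infinitely many non-silent transitions occur, matching the NWA acceptance semantics.

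The main obstacle, and the step I expect to require the most care, is reconciling the two simultaneous bookkeeping tasks within a single finite-state master while keeping the blow-up exponential rather than worse. The forward verifiers and the backward trackers must run concurrently, and because several slaves of bounded width may be active at once, the product of all their configurations is where the exponential size arises; I would argue the state space is at most exponential in the description of the NWA by bounding the number of simultaneously tracked configurations by the width and the per-slave state count. A subtle correctness point is that silent transitions (slaves returning $\bot$, including the dummy $\slaveA_0$) must be threaded through $\silent{\flimavg}$ exactly, so that the constructed automaton's $\flimavg$ value, computed after deleting $\bot$ labels, equals $\valueL{\nestedA}(w)$ on every word; I would verify the value-preservation and acceptance-preservation claims by matching accepting runs of $\nestedA$ with accepting runs of the product in both directions. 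Finally, since the emptiness algorithm for $\flimavg$-automata is $\NLOGSPACE$ in the size of the automaton and that automaton is exponential, the overall emptiness check runs in $\PSPACE$, giving containment; combined with the inherited hardness this yields $\PSPACE$-completeness.
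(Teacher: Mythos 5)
Your overall plan---finite range plus regular value classes for $g\in\{\fmin,\fmax,\fBsum{B}\}$, reduction to an exponential-size $\flimavg$-automaton, $\NLOGSPACE$ emptiness for that automaton giving $\PSPACE$ containment, hardness inherited from the forward-only case---is exactly the paper's plan, and your treatment of forward-walking slaves (guess the returned value $\lambda$ at invocation, emit it as the transition weight, launch a verifier for the regular language of words with value $\lambda$, and force all verifiers to terminate via the acceptance condition) matches the paper's construction. However, your mechanism for backward-walking slaves contains a genuine flaw: you seed a backward slave's tracker when the master \emph{invokes} it and then let it ``advance \ldots forward in the scan.'' This is temporally impossible. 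A backward slave invoked at position $i$ reads $w[i]w[i-1]\cdots$, so its entire run lies at positions $\leq i$, all of which the left-to-right scan has already passed; there is nothing left to advance after the invocation, and ``one position backward relative to its invocation'' is the \emph{opposite} of ``forward in the scan.'' The correct simulation is the mirror image of what you wrote: the forward scan must track the \emph{reversed} runs of backward slaves, which start at their termination points (earlier in the word) and end at their invocation points. The paper's construction does this by letting the simulating automaton add accepting states of the value-class automata to a component $B$ at (guessed) termination positions, advance them under the reversed transition relation, and, at a transition invoking a backward slave with value $v$, require that the corresponding state of $\aut_{\slaveA,v}$ is present in $B$ (which is also what determines the weight $v$). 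Alternatively, as the paper's key-ideas paragraph notes, one can avoid guessing by deterministically maintaining, for each backward slave and each value, the set of states of the reversed value-class automaton reachable from a restart at \emph{any} earlier position, and reading off the achievable values at invocation time; either way, the information flows from termination point to invocation point, not the other way around.

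Second, your size argument leans on an assumption the theorem does not make: you write ``since width is bounded \ldots this tracking is finite'' and bound the state space ``by the width.'' Theorem~\ref{th:regularBF} concerns arbitrary non-deterministic bidirectional NWA with no width restriction; even forward-only NWA have unbounded width (at position $i$ up to $i$ slaves may be active), and bidirectional NWA need not have finite width at all. Per-slave-instance tracking therefore cannot be cashed out into a finite-state automaton. The paper avoids this entirely by a genuine subset construction: the state space is $Q_m \times 2^{Q_{SF}} \times 2^{Q_{SF}} \times 2^{Q_{SB}}$, where $Q_{SF}$ (resp.\ $Q_{SB}$) is the union of the state sets of the value-class automata of forward-walking (resp.\ backward-walking) slaves, so that arbitrarily many simultaneously active slaves in the same verifier state collapse to a single set element, and the two forward components implement the standard two-set bookkeeping (move states from $F_2$ to $F_1$ when $F_1$ empties; require $F_1$ to empty infinitely often) that forces every pending verifier to terminate even though unboundedly many can be alive at once---this is also the missing detail behind your phrase ``the acceptance condition must ensure every such verifier eventually accepts.'' Your proof needs both repairs---the direction of the backward simulation, and replacing width-bounded per-instance tracking by subsets---before the exponential bound and the value-preservation claims go through.
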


\noindent\emph{Note}. The complexity in Theorem~\ref{th:regularBF} does not depend on encoding of weights in slave automata, i.e., the problem is $\PSPACE$-hard even for a fixed set of weights, and
it remains in $\PSPACE$ for weights encoded in binary. 

The average energy property from Example~\ref{ex:energy} with bounds on energy levels can be expressed with $(\flimavg;\fBsum{B})$-automata. 
The emptiness problem for these automata is decidable by Theorem~\ref{th:regularBF}.

\begin{remark}[Parametrized complexity]
 If we assume that the size of slave automata in Theorem~\ref{th:regularBF} is bounded by a constant, the complexity 
of the emptiness problem drops to $\NLOGSPACE$-complete. $\NLOGSPACE$-hardness follows from  $\NLOGSPACE$-hardness of the emptiness problem for
$\flimavg$-automata, which can be considered as a special case of NWA.
\label{rem:parametricRegular}
\end{remark}

The results of this section apply to general bidirectional NWA. 
In the following section we consider bidirectional NWA with the sum value function,
where we consider additional restrictions of finite width (Section~\ref{s:finite}) 
and bounded width (Section~\ref{s:bounded}).
We also justify in Remark~\ref{rem:FinieWidthNatural} that the finite width 
restriction is natural.

\section{Finite-width case}
\label{s:finite}
In this section we study NWA satisfying the \emph{finite width} condition. 
First, we briefly discuss the finite-width condition and argue that it is a natural restriction. 
Next, we show that the emptiness problem for (finite-width) $(\flimavg;\fsum^+)$-automata with bidirectional slave automata is decidable in $\EXPSPACE$.
We conclude this section with the expressiveness results; we show that classical NWA with forward-walking slave automata and NWA with backward-walking slave automaton have incomparable expressive power.
Hence, (finite-width) $(\flimavg;\fsum^+)$-automata with bidirectional slave automata are strictly more expressive than NWA with one-direction slave automata. 

\subsection{The finite-width condition}

\Paragraph{Finite width}. An NWA $\nestedA$ has finite width if and only if in every accepting run of $\nestedA$ at every position at most 
finitely many slave automata are active.
Classical NWA with forward-walking slave automata only have finite width. 
Indeed, in any run, at any position $i$ at most $i$ slave automata can be active.

\begin{example}
Consider an NWA over $\{a,b\}$ such that the master automaton accepts a single word $a b^{\omega}$ and all slave automata are backward walking and accept words $b^* a$.
All slave automata terminate at the first position of $a b^{\omega}$ and hence this NWA does not have finite width.
\end{example}

The automata expressing properties from Examples~\ref{ex:dual-ART},~\ref{ex:data-consistency}~and~\ref{ex:regret} are 
 finite-width $(\flimavg;\fsum^+)$-automata with bidirectional slave automata. 
Observe that an NWA does not have finite width if and only if it has an accepting run, in which at some position $i$ infinitely many backward-walking slave automata terminate.

\begin{remark}[Finite width is natural for positive sum]
\label{rem:FinieWidthNatural}
Let $\nestedA$ be a $(\flimavg;\fsum^+)$-automaton with bidirectional slave automata.
Except for degenerate cases, runs of $\nestedA$, which do not have finite width, have infinite value. Indeed, 
consider a run $\pi$ and a position $i_0$ at which infinitely many automata are active. Since only finitely many forward-walking slave automata are active at $i_0$, infinitely many of them are backward-walking and 
for some position $i < i_0$, infinitely many slave automata $S$ terminate at position $i$.
 Then, one of the following holds: either that value of this run is
infinite or one of the following two degenerate cases happen:
(a)~The slave automata from $S$ are invoked with zero density (i.e., if consider the long-run average of the 
frequency of invoking slave automata, then it is zero). This situation represents that monitoring with
slave automata happens with vanishing frequency which is a degenerate case.
(b)~The values returned by the slave automata from $S$ are bounded. It follows that these automata take transitions of non-zero weight only in some finite subword $w[i,j]$ of the input word $w$.
This situation represents monitoring of an infinite sequence, in which all events past position $j$ are irrelevant. This is a degenerate case in the infinite-word case. 
\end{remark}

The finite-width property does not depend on weights and hence we can construct an exponential-size \buchi{} automaton $\nonnestedA$, which simulates runs of 
a given NWA $\nestedA$. Having $\nonnestedA$, we can check whether it has a run corresponding to an accepting run of $\nestedA$, in which infinitely many backward-walking slave automata terminate 
at the same position. This check can be done in logarithmic space and hence checking the finite-width property is in $\PSPACE$. 
A simple reduction from the non-emptiness problem for NWA shows $\PSPACE$-hardness of checking the finite-width property.

\begin{restatable}{theorem}{FiniteWidthDecidable}
The problem whether a given NWA has finite width is $\PSPACE$-complete.
\label{th:FiniteWidthDecidable}
\end{restatable}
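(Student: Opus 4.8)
The plan is to prove Theorem~\ref{th:FiniteWidthDecidable} by establishing the two directions separately: membership in $\PSPACE$ via a construction that reduces the finite-width question to a reachability-type check on a \buchi{} automaton, and $\PSPACE$-hardness via reduction from the emptiness problem for forward-only NWA, which is known to be $\PSPACE$-hard by~\cite{nested}. The crucial observation, already noted in the excerpt, is that an NWA $\nestedA$ fails to have finite width if and only if it admits an accepting run in which infinitely many backward-walking slave automata terminate at one and the same position $i$. This characterization is what makes the problem finitary: we do not need to track the actual weights, only the structural fact that infinitely many backward automata reach an accepting state exactly at position~$i$.

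First I would build the \buchi{} automaton $\nonnestedA$ that simulates runs of $\nestedA$. Since the finite-width property is independent of the weights, $\nonnestedA$ only needs to track the set of currently active slave automata together with their states; this is a standard subset/product construction whose state space is exponential in the size of $\nestedA$. A run of $\nonnestedA$ corresponds to an accepting run of $\nestedA$ provided the master acceptance condition holds and infinitely many non-silent transitions are taken. The key design point is to detect the pathological situation where infinitely many backward-walking slaves terminate at a common position. To capture this, I would have $\nonnestedA$ \emph{guess} the offending position $i$ (or rather, guess that the current position will be the common termination point), then verify that from that point onward infinitely many backward-walking slave automata are launched that eventually walk back and terminate exactly at~$i$. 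Because each backward slave started at position $j > i$ must traverse the fixed finite infix $w[i,j]$, and there are only finitely many slave-automaton states, one can encode "a backward slave launched now will terminate at the marked position $i$" as an additional finite component of the automaton's state, propagated leftward. Accepting such a run infinitely often is then a \buchi{} condition on~$\nonnestedA$.

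The reachability/repeated-reachability check on $\nonnestedA$ can be carried out in space logarithmic in the size of $\nonnestedA$, hence polynomial in $|\nestedA|$, giving the $\PSPACE$ upper bound; this is the standard on-the-fly emptiness check for \buchi{} automata of exponential size, where we never materialize $\nonnestedA$ explicitly but only store individual states. For $\PSPACE$-hardness, I would reduce from the emptiness problem for forward-only $(\flimavg;g)$-automata: given a forward-only NWA, one augments it with a backward-walking gadget that, on the accepting runs of the original automaton, forces infinitely many backward slaves to terminate at a single position precisely when the original automaton has an accepting (non-empty) run. Thus the constructed NWA has infinite width iff the original has an accepting run, reducing non-emptiness to the complement of finite-width (or to finite-width directly, after dualizing the gadget).

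The main obstacle I anticipate is the correctness of the leftward-propagation bookkeeping in $\nonnestedA$: one must argue carefully that "infinitely many backward slaves terminate at the same position" is faithfully captured by a \buchi{} condition, even though the common termination position $i$ is fixed while the launching positions $j$ range over infinitely many values growing without bound. The subtlety is that each such backward slave must be tracked across an unbounded distance $j - i$, yet the automaton has only finitely many states; the resolution is that we do not track each slave individually but instead maintain, at the marked column~$i$, a summary of which backward-slave states are currently "in transit toward $i$ and fated to terminate there," updated as the simulation sweeps positions. Proving that this finite summary suffices—that it neither loses a genuinely terminating slave nor spuriously counts one—is where the real work lies, and it hinges on the fact that the behavior of a backward slave on the infix $w[i,j]$ depends only on the slave's transition structure and the letters of that infix, both of which $\nonnestedA$ already simulates.
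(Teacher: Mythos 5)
Your proposal follows essentially the same route as the paper's proof: the same characterization of infinite width (an accepting run in which infinitely many backward-walking slave automata terminate at one common position), the same exponential-size \buchi{} simulation in which the common termination position is guessed and a finite set-summary of backward-slave states destined to terminate there is maintained (the paper's $B_1$ component) and checked on-the-fly in polynomial space, and the same hardness reduction from emptiness of forward-only NWA via a backward-walking gadget (the paper uses a $\$$-marker and $\#$-interleaving so that infinite width arises exactly on accepting runs). The details you defer—ensuring all forward slaves have finite runs and handling multiplicities of backward slaves in the summary set—are resolved in the paper by the standard two-bucket trick ($F_1,F_2$) and by optionally not removing states from $B_1$, respectively, so your plan is sound.
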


\subsection{The absolute sum value function} 
 
We present the main result on NWA of finite width.

\begin{restatable}{theorem}{ForwardAndBackward}
\label{th:FiniteWidth}
The emptiness problem for finite-width $(\flimavg;\fsum^+)$-automata with bidirectional slave automata is $\PSPACE$-hard and 
it is decidable in $\EXPSPACE$.
\end{restatable}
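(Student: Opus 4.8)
The plan is to prove the two parts of Theorem~\ref{th:FiniteWidth} separately: the $\PSPACE$-hardness by reduction from a known hard problem, and the $\EXPSPACE$ upper bound by a reduction to emptiness of an exponential-size $\flimavg$-automaton (or a closely related object over an exponentially large alphabet/state space). For the hardness, the cleanest route is to inherit it directly from the $\PSPACE$-hardness of the emptiness problem for $(\flimavg;\fsum^+)$-automata with forward-walking slave automata only, which is established in~\cite{nested}; since forward-only NWA are a syntactic special case of bidirectional NWA and always have finite width (as noted just above Theorem~\ref{th:FiniteWidthDecidable}), the same lower bound transfers immediately. So the real content lies in the $\EXPSPACE$ containment.

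For the upper bound I would exploit the finite-width assumption to control the backward-walking slaves. The difficulty compared to Theorem~\ref{th:regularBF} is that $\fsum^+$ does not have finite range, so the trick of having the master nondeterministically guess and verify the returned value of each slave does not directly apply, and we cannot simply simulate all backward slaves in lockstep because their returned weights are unbounded. My approach would be to build a single $\flimavg$-automaton $\nonnestedA$ that simulates a run of $\nestedA$ while tracking, in its state, the configuration of every currently active slave automaton. Forward-walking slaves are handled exactly as in the standard forward-only construction of~\cite{nested}: each active forward slave contributes a component to the state, and because $\fsum^+$ accumulates nonnegative increments, the partial sum is released as the master weight when the slave terminates. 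The finite-width hypothesis, together with Theorem~\ref{th:FiniteWidthDecidable} guaranteeing we only consider finite-width runs, bounds the number of simultaneously active slaves, so the product state space is of exponential size in the description of $\nestedA$.

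The main obstacle is the \emph{backward}-walking slaves, since a backward slave invoked at position $i$ reads the word to the \emph{left} of $i$, so its value cannot be computed incrementally as the master moves rightward. The key idea I would pursue is a time-reversal bookkeeping: a backward slave invoked at position $i$ that terminates at some earlier position $j<i$ can be viewed, from the perspective of a left-to-right simulation, as a slave that is ``born'' at $j$ and ``matured'' at $i$. Thus when the master simulation passes position $j$, it nondeterministically commits to starting a reverse-simulation of each backward slave that will eventually be invoked at some future position and terminate exactly at $j$; the simulating automaton carries these partially-completed reverse runs in its state and, upon reaching the invocation position $i$, checks that the guessed reverse run is consistent and releases the accumulated $\fsum^+$ value as the master transition weight at $i$. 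The finite-width condition is precisely what makes this sound: it guarantees that at each position only finitely many backward slaves terminate there, so the number of reverse-simulations pending in the state stays bounded, keeping $\nonnestedA$ of exponential size. The most delicate step will be arguing that this guess-and-verify reverse simulation faithfully preserves the multiset of returned values (and hence the $\silent{\flimavg}$ value of the overall run), including the correct alignment of which master transition receives which slave's value and the handling of silent transitions via $\silent{\cdot}$.

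Once $\nonnestedA$ is constructed with these guarantees, the value of every word under $\nestedA$ equals its value under $\nonnestedA$, and since $\nonnestedA$ is a $\flimavg$-automaton of exponential size, the emptiness problem for it is decidable in $\NLOGSPACE$ in the size of $\nonnestedA$ (assuming unary weights) by the result cited before Theorem~\ref{th:regularBF}; however, because the weights here come from $\fsum^+$ and may be exponentially large, I would account for the binary encoding, which together with the exponential state space yields the claimed $\EXPSPACE$ bound overall. I would present the construction of $\nonnestedA$ as the technical heart, verify the value-preservation lemma, and then read off the complexity.
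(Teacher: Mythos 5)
Your $\PSPACE$-hardness argument is exactly the paper's (inherit the lower bound from forward-only NWA, which are a finite-width special case), so that part is fine. The $\EXPSPACE$ upper bound, however, has a fatal gap: you conflate \emph{finite} width with \emph{bounded} width. The finite-width condition only says that in every accepting run each position has finitely many active slave automata; it provides no uniform bound across positions. Indeed, forward-only NWA always have finite width, yet at position $i$ up to $i$ slave automata may be active, so the number of simultaneously active slaves grows without bound along the run. Consequently the exponential-size product automaton you propose, whose states track ``the configuration of every currently active slave automaton,'' does not exist, and your time-reversal bookkeeping for backward slaves (which relies on only boundedly many pending reverse simulations) collapses for the same reason. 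A second, independent problem is that even if the width were bounded, $\fsum^+$ values are unbounded integers: a partial sum can be neither stored in the state of a finite $\flimavg$-automaton nor ``released'' as a transition weight upon termination, since a finite weighted automaton has only finitely many distinct weights. And redistributing slave weights over the positions where they are incurred is not value-preserving for $\flimavg$, because the NWA value is the limit average of the sequence of returned values indexed by \emph{invocations}, not by positions; the paper's Example~\ref{ex:runningBounded} shows exactly how this mismatch breaks naive simulations (dense runs need not attain the infimum once backward slaves are present).

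The paper's actual route is quite different and is designed precisely to circumvent these two obstacles. It first makes the NWA deterministic by encoding nondeterministic choices into the alphabet (Lemma~\ref{WLOGDeterministicLimAvgSum}), then introduces \emph{barriers}: exponentially short words whose insertion at a position forces all active forward and backward slaves to terminate nearby without increasing multiplicities or accumulated values. It proves that barriers exist at almost every position of any accepted word (Lemma~\ref{l:barriersExist}), and that inserting a barrier at a position where exponentially many slaves accumulate exponentially large values strictly decreases the partial sums (Lemma~\ref{l:barrierDecreasesPartialSum}). This shows the infimum is attained on words where only exponentially many slaves ever accumulate large values; slaves with small (exponentially bounded) accumulated values are then compressed by guess-and-verify as in Theorem~\ref{th:regularBF}. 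The result is a reduction to the emptiness problem for an exponential-size NWA of exponentially \emph{bounded} width, which is solved not by converting to a plain $\flimavg$-automaton but by the cycle analysis on the graph of $k$-configurations with focus and restriction (Theorem~\ref{th:BoundedBF}), yielding the $\EXPSPACE$ bound. To repair your proof you would need essentially all of this machinery; the missing idea is the barrier argument that bounds how many slaves can matter, which is the technical heart of the theorem.
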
	

\noindent\emph{Key ideas}. 
$\PSPACE$-hardness follows from $\PSPACE$-hardness of the emptiness problem for $(\flimavg;\fsum^+)$-automata with forward-walking slave automata only.
Containment in $\EXPSPACE$ is shown by reduction to the bounded-width case, which is shown decidable in the following section (Theorem~\ref{th:BoundedBF}).
We briefly describe this reduction. 
Consider a finite-width $(\flimavg;\fsum^+)$-automaton $\nestedA$ with bidirectional slave automata.
First, we observe that without loss of generality, we can assume that $\nestedA$ is deterministic.
Second, we observe that in every word accepted by $\nestedA$, at almost every position $i$ there exists 
a \emph{barrier}, which is a word $u$ such that 
(a)~the word $w' = w[1,i] u w[i+1,\infty]$, i.e., $w$ with $u$ inserted at position $i$, is accepted by $\nestedA$, and the runs on $w$ and $w'$ coincide except for positions in $w'$ corresponding to $u$,
(b)~in the run on $w'$, backward-walking slave automata active at the end of $u$  terminate within $u$,
(c)~in the run on $w'$, forward-walking slave automata active at the beginning of $u$ terminate within $u$,
and (d)~$u$ has exponential length. 
Basically, active slave automata cannot cross $u$ in $w'$ and in the effect insertion of $u$ bounds the number of active slave automata. 
Existence of barriers follows from the finite-width property of $\nestedA$.
 
We insert barriers in $w$ to reduce the number of active slave automata.
While inserting $u$ at a certain position may increase the partial average, 
we show that if at position $i$ in $w$, exponentially many active slave automata accumulates exponential weight past crossing $i$ (some slave automata walk forward while other backwards),
all partial averages  (of values returned by slave automata)  in $w'$ are bounded by the corresponding partial averages in $w$.
We conclude that for every word $w$, there exists a word $w'$ such that (i)~at every position at most exponentially many slave automata accumulate exponential values, and 
(ii)~the value of $w'$ does not exceed the value of $w$.
Thus, to compute the infimum over all runs of $\nestedA$, we can focus on runs in which at every position at most exponentially many slave automata accumulate exponential value.
Runs of slave automata in which they accumulate bounded (exponential) values can be eliminated as in Theorem~\ref{th:regularBF}, i.e., 
we can construct an exponential size NWA $\nestedA'$, which simulates $\nestedA$, and such that its slave automata run as long as they can accumulate value exponential (in $|\nestedA|$) and 
otherwise they non-deterministically pick the remaining value and the master automaton verifies that the pick is correct. 
Therefore, the infimum over all runs of $\nestedA$ coincides with the  infimum over all runs of $\nestedA'$ of width exponentially bounded.
\smallskip

\begin{remark}[Parametrized complexity]
If we assume that the size of slave automata in Theorem~\ref{th:FiniteWidth} is bounded by a constant, the complexity 
of the emptiness problem drops to $\NLOGSPACE$-complete. $\NLOGSPACE$-hardness follows from  $\NLOGSPACE$-hardness of the emptiness problem for
$\flimavg$-automata, which can be viewed as a special case of NWA. 
\end{remark}

\subsection{Expressive power}
\label{s:expressionPower}

\newcommand{\FWDBWD}{\mathcal{FB}(\flimavg; \fsum^+)}
\newcommand{\FWD}{\mathcal{F}(\flimavg; \fsum^+)}
\newcommand{\BWD}{\mathcal{B}(\flimavg; \fsum^+)}

DCP defined in Example~\ref{ex:data-consistency} can be expressed by a deterministic finite-width $(\flimavg;\fsum^+)$-automaton with bidirectional slave automata.
We show that both forward-walking and backward-walking slave automata are required to express DCP. 
That is, we formally show that DCP cannot be expressed by any (non-deterministic) $(\flimavg;\fsum^+)$-automaton with slave automata walking in one direction only.

\smallskip
\Paragraph{Classes of NWA}. 
We define $\FWDBWD$ as the class of all finite-width $(\flimavg;\fsum^+)$-automata with bidirectional slave automata.
We define $\FWD$ (resp., $\BWD$) as the subclass of $\FWDBWD$ consisting of NWA with forward-walking (resp., backward-walking) slave automata only.
 
We establish that classes $\FWD$ and $\BWD$ have incomparable expressive power and hence they are strictly less expressive than class $\FWDBWD$.  

\noindent\emph{Key ideas}. 
Consider a word  $w = (c \#^N  r^{2K} c \#^{2N} r^K )^{\omega}$ for some big $K$ and much bigger $N$.
An NWA from $\BWD$ computes DCP of $w$ by invoking (non-dummy) slave automata at every $r$ letter and taking silent transitions 
on letters $\#,c$. We show that an NWA $\nestedA$ from $\FWD$ cannot invoke the right number of slave automata, even if it uses non-determinism. 
More precisely, we show that $\nestedA$ computing DCP has to invoke at most $O(K)$ non-dummy slave automata on average on subwords $c \#^N  r^{2K} c \#^{2N} r^K$.
Since $N$ is much bigger than $K$, we conclude that $\nestedA$ has a cycle over $\#$ letters at which it takes only silent transitions and a cycle over $r$ letters on which it increases the multiplicity of active slave automata.
Using these two cycles, we construct a run of value smaller than DCP, which contradicts the assumption that $\nestedA$ computes DCP. 
Similarly, we can show that an NWA from $\BWD$ 
 cannot compute correctly DCP of words
of the form $w = (c  w^{2K} \#^N c w^K \#^{2N}) ^{\omega}$, while on these words DCP is expressible by an NWA from $\FWD$.

\begin{restatable}{lemma}{IncomparableForwardAndBackward}
(1)~DCP restricted to alphabet $\{r,\#,c\}$ is expressed by an NWA from $\BWD$, but it is not expressible by NWA from $\FWD$.
(2)~DCP restricted to alphabet $\{w,\#,c\}$ is expressed by an NWA from $\FWD$, but it is not expressible by NWA from $\BWD$.
\end{restatable}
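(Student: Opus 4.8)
The two statements are mirror images of each other under word reversal (reversing the input exchanges ``preceding commit'' with ``following commit'' and exchanges forward- with backward-walking slaves), so I would prove statement~(1) and obtain~(2) by the symmetric argument. The positive (expressibility) claims are immediate specializations of Example~\ref{ex:data-consistency}: for~(1) an automaton from $\BWD$ invokes at every $r$ a backward-walking $\fsum^+$-slave that puts weight $1$ on each step and terminates at the first $c$ encountered going backward (taking silent transitions at $\#$ and $c$), so the returned value is exactly the distance from the read to the preceding commit, and the master applies $\flimavg$; for~(2) the symmetric forward-walking construction works. The entire content is therefore the negative claim: no (non-deterministic) automaton $\nestedA$ from $\FWD$ computes DCP on $\{r,\#,c\}$.

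I would fix the family $w_{K,N} = (c \#^N r^{2K} c \#^{2N} r^K)^{\omega}$ with $N \gg K \gg |\nestedA|$ and first record the exact target value. The $2K$ reads of the first block lie at backward-distance $N+1,\dots,N+2K$ from the preceding commit, and the $K$ reads of the second block at backward-distance $2N+1,\dots,2N+K$, so summing per period and dividing by the $3K$ reads gives $\mathrm{DCP}(w_{K,N}) = \frac{4N}{3} + O(K)$. The feature a forward automaton cannot track is the \emph{crossover}: a first-block read ``wants'' value $\approx N$, but everything a forward slave can observe lies in its future, where the next $\#$-block has length $2N$; a second-block read ``wants'' value $\approx 2N$, yet its forward future contains a $\#$-block of length $N$. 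Thus the quantity a read should contribute is anti-correlated with what forward slaves can observe, whereas the backward construction sees the correct preceding blocks.

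Assuming $\nestedA$ computes DCP, I would fix an accepting run on $w_{K,N}$ whose value is close to the infimum, hence close to $\mathrm{DCP}(w_{K,N})$. First I would show, from correctness of DCP on the family, that per period $\nestedA$ invokes only $O(K)$ non-dummy slaves; since $N \gg K\cdot|\nestedA|$, almost all master transitions on each $\#$-block are silent, so by pigeonhole the master traverses a \emph{silent $\#$-cycle} whose recurrent state is independent of the block length. Separately, on a read block of length $\gg |\nestedA|$ the master traverses an \emph{$r$-cycle}; since the run must manufacture the $\Theta(N)$ values using forward slaves that necessarily walk across the following commit into the next $\#$-block, this cycle increases the multiplicity of forward slaves crossing out of the block. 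Pumping the two cycles lets me vary block lengths while keeping the master's recurrent behavior fixed: because the silent $\#$-cycle state does not depend on the length of the \emph{preceding} $\#$-block, the forward slaves invoked at the reads of a period — and hence the value $\nestedA$ assigns there — are, up to the periodic copies crossing boundaries, essentially unchanged when that preceding block is lengthened, whereas DCP strictly increases. Exploiting the crossover I would thereby build a word $w'$ together with an accepting run whose value is strictly below $\mathrm{DCP}(w')$, contradicting $\valueL{\nestedA}(w') = \mathrm{DCP}(w')$.

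The main obstacle is this last step. Because the master is non-deterministic and $\nestedA$ takes an infimum over runs, I must exhibit \emph{one} explicit cheap run rather than reason about all runs; moreover forward slaves may straddle the pumped $\#$- and $r$-cycles, so pumping changes the values of exactly those slaves that cross the pumped region. The delicate bookkeeping is to prove that along the constructed run every partial average stays below the corresponding partial average forced by DCP — that is, the extra weight the straddling forward slaves pick up from a lengthened $\#$-block is dominated by the increase in DCP it would have to produce. Establishing this inequality for every prefix, so that the $\liminf$ of the constructed run is genuinely smaller, is the technical heart of the argument and the reason it is substantially harder than in the forward-only setting.
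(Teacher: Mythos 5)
Your setup matches the paper's proof exactly: the same word family $(c\#^N r^{2K} c\#^{2N} r^K)^{\omega}$, the same value computation, the plan of extracting a silent $\#$-cycle and an $r$-cycle that invokes non-dummy slave automata, and a pumping contradiction; the positive (expressibility) claims are indeed immediate from Example~\ref{ex:data-consistency}. But there are two genuine gaps. First, the step you yourself flag as ``the main obstacle'' is where the entire content lies, and your plan for it (prefix-by-prefix comparison of partial averages) is not how it goes through. The paper closes it by building an ultimately periodic run with configuration sequence $\alpha_0((\tau_\#)^L \alpha_A (\tau_r)^M \alpha_B)^{\omega}$, where $\alpha_B$ contains a commit and is chosen so that all slave automata active at its first configuration terminate inside $\alpha_B$; periodicity makes the $\flimavg$ value equal a single cycle average, so no prefix bookkeeping is needed. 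Then the $\geq M$ slaves invoked in the $r$-part have values bounded by $S_2 = C(|\alpha_A|+|\alpha_B|+M|\tau_r|)$, \emph{independent of $L$}, while only the at most $|\alpha_B|$ slaves invoked in $\alpha_B$ can straddle the pumped $\#$-block and pick up weight of order $L|\tau_\#|$; choosing first $M = 2C|\alpha_B||\tau_\#|$ and then $L$ large makes the cycle average smaller than $L$, whereas DCP of the constructed word exceeds $L|\tau_\#|$. Relatedly, your claim that the $r$-cycle actually invokes non-dummy slaves, and your $O(K)$ bound on invocations per period, are asserted rather than proved: correctness on $w_{K,N}$ alone does not force invocations to sit inside the $r$-blocks (a non-deterministic automaton could in principle invoke its slaves on $\#$-positions). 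The paper derives both facts by surgery: it deletes configuration-repeating segments from the $r^{2K}$-blocks of a near-optimal run, obtaining a valid accepting run on a word whose DCP is at least $\frac{3}{2}N$; since deletion can only decrease partial sums, the forced jump in value from $\frac{4}{3}N$ to $\frac{3}{2}N$ pins down both where the invocations occur and how many there are per block.

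Second, part~(2) does not follow from part~(1) ``by word reversal'': $\omega$-words cannot be reversed, and the two directions are genuinely asymmetric---forward-only NWA automatically have finite width, while backward-only NWA need not. The paper reruns the whole argument on words $(c\, w^{2K}\#^N c\, w^K \#^{2N})^{\omega}$ and, to obtain the analogue of the termination condition on $\alpha_B$ (its condition (C2)), explicitly uses the finite-width hypothesis built into the class $\BWD$: finite width is what guarantees that for every position $i$ there is a later position $j$ such that all backward-walking slaves active at $j$ have terminated within $[i,j]$, which is needed to close the pumping cycle. A symmetric transfer that does not address this point is not a proof of part~(2).
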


The above lemma implies that DCP over alphabet $\{r,w,\#,c\}$ is not expressible by any NWA from $\FWD$ nor  from $\BWD$.   
In conclusion, we have:

\begin{theorem}
\label{th:expressivness}
(1)~$\FWD$ and $\BWD$ have incomparable expressive power.
(2)~$\FWDBWD$ are strictly more expressive than $\FWD$ and $\BWD$.
\end{theorem}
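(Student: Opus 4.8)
The plan is to obtain the theorem as a direct corollary of the preceding lemma, so that the argument reduces to bookkeeping over the subclass inclusions; no new combinatorial work is required. All the genuine content — the two inexpressibility statements — already resides in the lemma.

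For part~(1), I would read incomparability directly off the two halves of the lemma. Part~(1) of the lemma exhibits a concrete property, DCP restricted to $\{r,\#,c\}$, that belongs to $\BWD$ but to no automaton in $\FWD$; this gives $\BWD \not\subseteq \FWD$. Symmetrically, part~(2) of the lemma provides DCP restricted to $\{w,\#,c\}$, which belongs to $\FWD$ but not to $\BWD$, giving $\FWD \not\subseteq \BWD$. Since neither class contains the other, the two classes have incomparable expressive power.

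For part~(2), I would first record the trivial inclusions: by the definition of the classes in Section~\ref{s:expressionPower}, both $\FWD$ and $\BWD$ are subclasses of $\FWDBWD$, so $\FWD \subseteq \FWDBWD$ and $\BWD \subseteq \FWDBWD$. It then remains to show both inclusions are strict, and the two separating properties from the lemma serve as witnesses. DCP over $\{r,\#,c\}$ lies in $\BWD \subseteq \FWDBWD$ but not in $\FWD$, so the inclusion $\FWD \subseteq \FWDBWD$ is strict; DCP over $\{w,\#,c\}$ lies in $\FWD \subseteq \FWDBWD$ but not in $\BWD$, so $\BWD \subseteq \FWDBWD$ is strict. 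Alternatively, a single witness suffices: the full DCP over $\{r,w,\#,c\}$ is computed by a deterministic finite-width bidirectional automaton (Example~\ref{ex:data-consistency}), hence lies in $\FWDBWD$, while the lemma implies it lies in neither $\FWD$ nor $\BWD$.

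The only point requiring care is that the separating properties genuinely sit inside $\FWDBWD$, which is immediate since $\FWDBWD$ contains both $\FWD$ and $\BWD$ by definition. The real obstacle is entirely internal to the lemma, namely the inexpressibility direction — showing that no (even non-deterministic) automaton in $\FWD$ computes DCP over $\{r,\#,c\}$. That is the cycle/pumping argument on words of the form $(c\#^N r^{2K} c \#^{2N} r^K)^{\omega}$ sketched before the lemma: one shows a forward-walking automaton cannot invoke the correct number of non-dummy slave automata, and then pumps a silent cycle over $\#$ against a multiplicity-increasing cycle over $r$ to build an accepting run of value strictly below DCP.
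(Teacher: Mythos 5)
Your proposal is correct and follows essentially the same route as the paper: the paper likewise treats the theorem as an immediate corollary of the lemma, reading the two non-inclusions off its two halves for incomparability, and using the (definitional) inclusions $\FWD,\BWD \subseteq \FWDBWD$ together with the lemma's witnesses — noting, as you do, that the lemma also rules out full DCP over $\{r,w,\#,c\}$ for both one-directional classes — to get strictness. The genuine content indeed lives in the lemma's pumping/cycle argument, which both you and the paper leave to that lemma's own proof.
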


\section{Bounded-width case}
\label{s:bounded}
\newcommand{\cycle}{\mathcal{C}}
\newcommand{\AvgE}{\textsc{AvgE}}
\newcommand{\gain}{\textsc{Gain}}
\newcommand{\exNWA}{\nestedA}

In this section, we study 
$(\flimavg;\fsum)$-automata with bidirectional slave automata, which have bounded
width. The bounded width restriction
 has been introduced in~\cite{nested-mfcs} to improve the complexity of the emptiness problem and to establish decidability of 
the emptiness problem for $(\flimavg;\fsum)$-automata. 
NWA considered in~\cite{nested-mfcs} have only  forward-walking slave automata, while we extend these results to NWA with bidirectional slave automata. 
This extension preserves the complexity bounds from~\cite{nested-mfcs}, i.e., the emptiness problem is in $\PTIME$ for constant width and $\PSPACE$-complete for width given in unary.

The bounded width restriction emerges naturally in examples presented so far. If we bound the number of pending requests, we can express
ART and AW (Examples~\ref{ex:ART}~and~\ref{ex:dual-ART}) by automata of bounded width. 
If we bound the number of writes and reads between any two commits, then DCP and MR (Examples~\ref{ex:data-consistency}~and~\ref{ex:regret}) can be expressed by NWA of bounded width.
These natural restrictions lead to more efficient decision procedures. 

The decision procedure in this section differs from the one from~\cite{nested-mfcs}.
The key step in the decidability proof from~\cite{nested-mfcs} is establishing the following dichotomy: either the infimum over values of 
all words is $-\infty$ or the infimum is realized by \emph{dense} runs. A run is dense if for the values $v_1, v_2, \ldots $ returned by slave automata invoked at positions $1,2, \ldots$ we have $\frac{v_i}{i}$ converges to $0$, i.e., values returned by slave automata are sublinear in the positions of their invocation.
Properties of dense runs allow for further reductions, which lead to a decision procedure.
However, we show in the following Example~\ref{ex:runningBounded} that in case of NWA with bidirectional slave automata, dense runs may not attain the infimum of all runs.

\begin{example}
\label{ex:runningBounded}
Consider a $(\flimavg;\fsum)$-automaton $\exNWA$ with bidirectional slave automata over $\Sigma = \{a,b,c\}$. 
The NWA $\exNWA$ accepts words $(a b^* c)^{\omega}$ and it works as follows. 
On letters $a$, $\exNWA$ invokes a forward-walking slave automaton $\slaveA_a$, which returns the number of the following $b$ letters up to $c$.
On letters $c$, $\exNWA$ invokes a backward-walking slave automaton $\slaveA_c$, which returns the number of the preceding $b$ letters since $a$.
Finally, on $b$ letters, $\exNWA$ invokes a slave automaton $\slaveA_b$, which takes a single transition and returns value $0$.
The NWA $\exNWA$ has width $3$.
We can show that the value of any dense run, is $2$.
However, the infimum over values of all words is $1$. 
The partial average of the values returned by slave automata on finite word $u = (a b^* c)^*$ is $2$, while the partial average 
over $u a b^N$ is $\frac{2 |u| + N}{|u| + N}$. Therefore, the value, which is limit infimum over partial averages, of word 
$a b^{n_1} c \ldots a b^{n_i} c \ldots$ is $1$ if sequence $n_1, \ldots$ is grows rapidly (e.g. doubly-exponentially). 
\end{example}

\noindent\emph{Main ideas}.
In Example~\ref{ex:runningBounded}, the words attaining the least value contain long blocks of letter $b$, 
at which the NWA $\exNWA$ is (virtually) in the same state, i.e., it loops in this state. 
On letters $b$, the sum of all weights collected by all active slave automata is $2$, i.e., 
automata $\slaveA_a, \slaveA_c$ collect weight $1$ and $\slaveA_b$ collect $0$.
However, in computing limit infimum over partial averages, we pick positions just before letter $c$ as they correspond to the local minima, i.e.,
we compute the partial average over prefixes $u a b^N$, and hence
the weights collected by $\slaveA_c$ do not contribute to this partial average.
Then, the sum of all weights collected by slave automata $\slaveA_a, \slaveA_b$ over a letter $b$ is $1$, which is 
equal to the least value of the limit infimum of the partial averages. 
In the following, we extend this idea and present the solution of all 
bounded-width $(\flimavg;\fsum)$-automata with bidirectional slave automata.
We show that the infimum over all words of a given NWA is the least average value over all cycles. 
In the following, we define appropriate notions of cycles of NWA and their average with exclusion of some slave automata.

\Paragraph{The graph of $k$-configurations}.
Let $\nestedA$ be a non-deterministic $(\flimavg; \fsum)$-automaton of width $k$.
We define a \emph{$k$-configuration} of $\nestedA$ as a tuple 
$(q; q_1, \ldots, q_k)$ where $q$ is a state of the master automaton, and
each $q_1, \ldots, q_k$ is either a state of a slave automaton of $\nestedA$ 
or $\bot$.
Given a run of $\nestedA$, we say that $(q; q_1, \ldots, q_k)$  is the $k$-configuration at position $i$ in the run if
$q$ is the state of the master automaton at position $i$ and there are $l \leq k$ active slave automata at position $i$, whose states are $q_1, \ldots, q_l$ 
ordered by position of invocation (backward-walking slave automata are invoked past position $i$).
If $l < k$, then $q_{l+1}, \ldots, q_{k} = \bot$. 
We say that a $k$-configuration $C_2$ is a successor of a $k$-configuration $C_1$ if there exists an accepting run of $\nestedA$ and $i>0$
such that $C_1$ is the $k$-configuration at $i$ and $C_2$ is the $k$-configuration at $i+1$ 	
The \emph{graph of $k$-configurations} of $\nestedA$ is the set of $k$-configurations  of $\nestedA$, which occur infinitely often in some accepting run,  with the successor relation.

\begin{figure}
\centering
\begin{tikzpicture}
\newcommand{\drawRun}[3]{
\draw (#1*0.3,0.15) -- (#1*0.3,#3);
\draw[->] (#1*0.3,#3) -- (#2*0.3,#3); 

}
\drawRun{0}{9}{1.2}
\drawRun{1}{7	}{1}
\drawRun{2}{10}{0.8}
\drawRun{4}{9}{0.4}
\drawRun{11}{1}{1.4}

\node at (3*0.3,-0.4) {$i$};	

\node at (7*0.3,-0.4) {$C$};	

\draw[densely dashed] (5*0.3+0.15,-0.5) -- (5*0.3+0.15,0.5) ; 
\draw[densely dashed] (8*0.3+0.15,-0.5) -- (8*0.3+0.15,0.5) ; 

\draw[ultra thick] (5*0.3+0.15,1.2) -- (8*0.3+0.15,1.2);
\draw[ultra thick] (5*0.3+0.15,0.8) -- (8*0.3+0.15,0.8);

\foreach \i in {0,...,11}
{
\node[rectangle,draw,minimum width=0.3cm,minimum height=0.3cm] at (0.3*\i,0) {};
}

\begin{scope}[xshift=4cm]

\foreach \i in {1,...,11}
{
\node[rectangle,draw,minimum width=0.3cm,minimum height=0.3cm] at (0.3*\i,0) {};
}

\node at (9*0.3,-0.4) {$j$};	

\node at (6*0.3,-0.4) {$C$};	

\draw[densely dashed] (3*0.3+0.15,-0.5) -- (3*0.3+0.15,0.5) ; 
\draw[densely dashed] (7*0.3+0.15,-0.5) -- (7*0.3+0.15,0.5) ; 

\draw[ultra thick] (3*0.3+0.15,1) -- (7*0.3+0.15,1);
\draw[ultra thick] (3*0.3+0.15,0.8) -- (7*0.3+0.15,0.8);
\draw[ultra thick] (3*0.3+0.15,0.4) -- (5*0.3+0.23,0.4);

\draw[ultra thick] (3*0.3+0.15,0.6) -- (4.6*0.3,0.6);
\draw[ultra thick] (5*0.3,0.6) -- (7*0.3+0.15,0.6);

\drawRun{4}{6}{0.4}
\drawRun{5}{10}{0.6}
\drawRun{2}{4.8}{0.6}
\drawRun{1}{9}{0.8}

\drawRun{8}{1}{1}
\drawRun{9}{2}{1.2}
\drawRun{10}{3}{1.4}

\end{scope}

\end{tikzpicture}
\caption{Pictorial explanation of  $\gain(\cycle,FC)$ (on the left) and $\AvgE(\cycle, R)$ (on the right). The gain $\gain(\cycle,FC)$ on the left is the sum of weights corresponding to thick parts of runs of slave automata invoked before $i$.
The average $\AvgE(\cycle, R)$ corresponds to the average of the thick parts of runs divided by the number of slave automata invoked within $C$. Slave automata invoked past $j$ are excluded from the average.
}
\label{fig:focus}
\end{figure}

\Paragraph{Characteristics of cycles}.
Let $\cycle$ be a cycle in a graph of $k$-configurations of $\nestedA$.
Let $F$ (resp., $B$) be the set of forward-walking (resp., backward-walking) slave automata, which are active throughout $\cycle$, i.e., which are  not invoked nor terminated within $\cycle$.
A \emph{focus} $Fc$ (for $\cycle$) is a downward closed subset of $F$, i.e., it contains all automata from $F$ invoked before some position.	
We define a focused gain $\gain(\cycle,Fc)$ as the sum of weights which automata from $Fc$ accumulate over $\cycle$.
A \emph{restriction} $R$ (for $\cycle$) is an upward closed subset of $B$, i.e., it contains all automata from $B$ invoked past certain position.	
We define an average weight of $\cycle$ excluding $R$, denoted by $\AvgE(\cycle,R)$, as the sum of weights of all transitions of slave automata within $\cycle$, 
except of transitions of slave automata from $R$, divided by the number of slave automata invoked within $\cycle$.

Intuitively, a focused gain refers to the value, which forward-walking slave automata invoked before some position $i$, accumulate over 
the part of run corresponding to $\cycle$ (see Figure~\ref{fig:focus}).
If the focused gain $\gain(\cycle,Fc)$ is negative, then by pumping $\cycle$ we can arbitrarily decrease the partial average of the values of slave automata invoked before $i$.
In consequence, we can construct a run of the value $-\infty$.
Formally, we define condition (*), which implies that there exists a run of value $-\infty$, as follows: (*)~there exists a cycle $\cycle$ in the graph of $k$-configurations of $\nestedA$ and 
a focus $Fc$ such that $\gain(\cycle,Fc) < 0$. 

If the focused gain of every cycle is non-negative, we need to examine averages of cycles, while excluding some backward-walking slave automata.
The average weight with restriction corresponds to the partial average of values aggregated over $\cycle$ by all slave automata invoked before position $j$ (which can be past $\cycle$). 
Backward-walking slave automata in the restriction correspond to automata invoked past $j$, and 
hence their values do not contribute to the partial average (up to $i$) (see Figure~\ref{fig:focus}).
In Example~\ref{ex:runningBounded}, we compute the average of slave automata over letters $b$, but we exclude the backward-walking slave automaton invoked at the following letter $c$.
Observe that for any cycle $\cycle$ and any restriction $R$, having a run containing $\cycle$ occurring infinitely often, 
we can repeat each occurrence of cycle $\cycle$ sufficiently many times so that the partial average of values of slave automata up to position corresponding to $j$ becomes arbitrarily close to the average $\AvgE(\cycle, R)$.
The resulting run contains a subsequence of partial averages convergent to $\AvgE(\cycle, R)$
and hence its value does not exceed $\AvgE(\cycle, R)$.
We can now state our key technical lemma. 
This lemma is a direct extension of an intuition behind computing the infimum over values of all words of the NWA $\exNWA$ from Example~\ref{ex:runningBounded}.

\begin{restatable}{lemma}{TechnicalBoundedWidth}
\label{l:techlical-bounded}
Let $\nestedA$ be a $(\flimavg;\fsum)$-automaton of bounded width with bidirectional slave automata. 
(1)~If condition (*) holds, then 
$\nestedA$ has a run of value $-\infty$. 
(2)~If (*) does not hold, then 
the infimum $\inf_w \nestedA(w)$ equals
the infimum $\inf_{\cycle \in \Lambda, R} \AvgE(\cycle, R)$, where 
$\Lambda$ is the set of all cycles $\cycle$ in the graph of $k$-configurations of $\nestedA$.
\end{restatable}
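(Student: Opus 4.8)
The plan is to prove the two parts of Lemma~\ref{l:techlical-bounded} separately, building directly on the intuitions developed for Example~\ref{ex:runningBounded}. For part~(1), the goal is to construct a run of value $-\infty$ from a cycle $\cycle$ and focus $Fc$ with $\gain(\cycle,Fc)<0$. I would first fix an accepting run in which $\cycle$ occurs infinitely often (this exists by definition of the graph of $k$-configurations), and locate a position $i$ witnessing the focus $Fc$, i.e.\ such that $Fc$ consists exactly of the forward-walking automata active throughout $\cycle$ that are invoked before $i$. Then I would argue that by pumping the cycle $\cycle$ at one of its occurrences $n$ times, the forward-walking automata in $Fc$ each traverse the cycle $n$ additional times, so their total accumulated weight decreases by roughly $n\cdot\gain(\cycle,Fc)$, which tends to $-\infty$. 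The key point is that these automata are invoked \emph{before} position $i$, so their additional negative weight is charged to a bounded number of slave automata; hence the partial average of the values of slave automata invoked up to $i$ can be pushed below any threshold. Taking $n\to\infty$ across successive occurrences of $\cycle$ yields a run whose sequence of partial averages has $\liminf=-\infty$, so the $\flimavg$ value is $-\infty$.

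For part~(2), assuming (*) fails, I would prove the two inequalities separately. The inequality $\inf_w \nestedA(w)\le \inf_{\cycle,R}\AvgE(\cycle,R)$ is the more routine direction: given any cycle $\cycle$ and restriction $R$, I would take an accepting run containing $\cycle$ infinitely often and, as the discussion preceding the lemma indicates, repeat each occurrence of $\cycle$ sufficiently many (say $n_j\to\infty$) times so that the partial average measured at the position corresponding to $j$ (the cutoff where the restricted backward-walking automata are invoked past $j$, hence do not contribute) converges to $\AvgE(\cycle,R)$. This produces a run with a subsequence of partial averages tending to $\AvgE(\cycle,R)$, so its $\liminf$ value is at most $\AvgE(\cycle,R)$; taking the infimum over $\cycle$ and $R$ gives the bound.

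The reverse inequality $\inf_w \nestedA(w)\ge \inf_{\cycle,R}\AvgE(\cycle,R)$ is where I expect the main obstacle, and I would approach it via a cycle-decomposition argument on accepting runs. Fix any accepting run $\pi$ of finite value and any position $t$ realizing a partial average close to the $\liminf$. Because the width is bounded by $k$, the sequence of $k$-configurations of $\pi$ lives in a finite graph, so the prefix up to $t$ can be decomposed into simple cycles in the graph of $k$-configurations plus a bounded-length acyclic remainder. The partial average at $t$ is then (approximately) a convex combination of the per-cycle averages of the decomposing cycles, where each cycle is counted with the appropriate restriction $R$ capturing which backward-walking automata active in that cycle are invoked after $t$ and hence excluded. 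The heart of the argument is bookkeeping the contributions of slave automata that are \emph{not} active throughout a given cycle (those invoked or terminated within it) and of the forward-walking automata active throughout: for the latter, I would invoke the failure of (*), i.e.\ $\gain(\cycle,Fc)\ge 0$ for every focus, to argue that these ``crossing'' forward automata can only increase the partial average, so discarding or lower-bounding them is safe. Making the convex-combination estimate precise — in particular controlling the boundary terms from the acyclic remainder and from automata straddling cycle boundaries, and matching each cycle to a valid $(\cycle,R)$ pair in $\Lambda$ — is the delicate part; the $\liminf$ of these convex combinations is then bounded below by $\min_{\cycle,R}\AvgE(\cycle,R)$, yielding the claim.
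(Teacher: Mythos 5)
Your proposal follows essentially the same route as the paper's proof: part~(1) by pumping the cycle and charging the accumulated negative focused gain to the fixed set of slave automata invoked before the cycle, the easy inequality of part~(2) by repeating the cycle so that partial averages at the cutoff positions converge to $\AvgE(\cycle,R)$, and the hard inequality by decomposing a prefix realizing the $\liminf$ into simple cycles (with restrictions recording backward-walking automata invoked past the cutoff) plus a bounded remainder, using the failure of (*) to lower-bound the future contribution of crossing forward-walking automata. The details you flag as delicate are exactly the ones the paper handles quantitatively (via the constant $D=-C\cdot k^2\cdot\conf{\nestedA}$ bounding truncation errors, and finiteness of simple cycles to pass from $\lambda+3\epsilon$ to $\lambda$), so your outline is a faithful match.
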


If the width of $\nestedA$ is constant, then the graph of $k$-configurations has polynomial size in $|\nestedA|$ and it can be constructed in polynomial time by employing
reachability checks on the set of all $k$-configurations w.r.t. to relaxation of the successor relation.
Therefore, for every focus $Fc$ and every $k$-configuration $c$ we can check in polynomial time whether there exists a cycle $\cycle$ such that
$\cycle[1] = c$ and $\gain(\cycle, Fc) < 0$. 
Thus, condition (1) can be check in logarithmic space assuming that weights are given in unary. If weights are given in binary, condition (1) can be checked in polynomial time.
Checking condition (2) has the same complexity as condition (1).
If the width $k$ is given in unary in input,  the graph of $k$-configurations is exponential in $|\nestedA|$ and
conditions (1) and (2) can be checked in polynomial space. 
Weights in this case are logarithmic in the size of the graph and hence changing representation from binary to unary does not affect the (asymptotic) size of the graph.

\begin{restatable}{theorem}{BoundedWidthForwardAndBackward}
\label{th:BoundedBF}
The emptiness problem for $(\flimavg;\fsum)$-automaton of width $k$ with bidirectional slave automata is 
(a)~$\NLOGSPACE$-complete for constant $k$ and weights given in unary, 
(b)~in $\PTIME$ for constant $k$ and weights given in binary, and
(c)~$\PSPACE$-complete for $k$ given in unary.
\end{restatable}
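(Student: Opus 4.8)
The plan is to read off the emptiness test directly from Lemma~\ref{l:techlical-bounded} and then analyze the cost of evaluating its two conditions on the graph of $k$-configurations. For threshold $\lambda$, emptiness asks exactly whether $\inf_w \nestedA(w) \leq \lambda$. By the lemma this infimum is $-\infty$ precisely when condition~(*) holds, and otherwise equals $\inf_{\cycle \in \Lambda, R} \AvgE(\cycle, R)$. So the decision procedure first checks condition~(*): if it holds we accept, since $-\infty \leq \lambda$; if it fails we accept iff some cycle $\cycle$ and restriction $R$ satisfy $\AvgE(\cycle, R) \leq \lambda$. Both are cycle-search problems on the graph of $k$-configurations, so the whole algorithm reduces to detecting cycles of a prescribed weight sign in that graph.

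Next I would cast the two tests as weighted-cycle problems. For condition~(*) I fix a focus $Fc$ and weight each edge $e$ by the quantity $g_e^{Fc}$ that the focused slaves contribute on that transition; then $\gain(\cycle,Fc)=\sum_{e\in\cycle} g_e^{Fc}$, so the existence of a cycle with $\gain(\cycle,Fc)<0$ is exactly negative-cycle detection for this edge-weighting, restricted to walks along which the focused slots stay persistently occupied. Since a focus is a downward-closed subset of the at most $k$ forward automata active throughout a cycle, there are at most $k+1$ foci to iterate over. For condition~(2) I rephrase the ratio test: writing $\lambda=p/q$ with $q>0$, and letting $W_R(e)$ be the slave weight on $e$ excluding the automata of $R$ and $I(e)$ the number of invocations on $e$, the inequality $\AvgE(\cycle,R)\leq\lambda$ is equivalent to $\sum_{e\in\cycle}\bigl(q\,W_R(e)-p\,I(e)\bigr)\leq 0$; thus after re-weighting each edge by $q\,W_R(e)-p\,I(e)$ the test becomes non-positive-cycle detection, again iterating over the at most $k+1$ restrictions (upward-closed subsets of the backward automata active throughout the cycle).

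For the upper bounds I combine these reductions with the graph size recalled before the theorem. A negative (resp.\ non-positive) cycle exists iff a closed walk of length at most the number of vertices with negative (resp.\ non-positive) total weight exists, by decomposing any such walk into simple cycles and picking an extremal one. For constant $k$ the graph has polynomial size; with unary weights the running weight along a bounded walk stays polynomially bounded, so the walk is guessed on the fly while storing only the current vertex, a step counter and the running weight, giving the $\NLOGSPACE$ bound of case~(a); with binary weights the same cycle and minimum-ratio problems are solved by standard Bellman--Ford/Karp machinery in $\PTIME$, giving case~(b). For $k$ in unary the graph is exponential but each configuration has polynomial description, so reachability and the bounded-walk search run on the fly in nondeterministic polynomial space, which equals $\PSPACE$ by Savitch; here the weights are only logarithmic in the graph size, so the binary/unary distinction is immaterial, yielding the containment in case~(c). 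Iterating over the at most $k+1$ foci and restrictions does not affect any of these bounds.

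The lower bounds follow from special cases already available. A plain $\flimavg$-automaton is the width-one NWA in which every slave takes a single weighted transition, and emptiness for $\flimavg$-automata is $\NLOGSPACE$-hard (as invoked in the parametrized remarks), giving the hardness of case~(a); forward-walking $(\flimavg;\fsum)$-automata with width in unary are a special case of the bidirectional model and their emptiness is $\PSPACE$-hard by~\cite{nested-mfcs}, transferring verbatim to case~(c). The hard part will be the bookkeeping that keeps condition~(2) inside the target classes, namely folding the exclusion of $R$ and the division by $I(\cycle)$ into the single edge-weight $q\,W_R(e)-p\,I(e)$ so that the average test collapses to a genuine cycle-sign test; and, for case~(a), verifying that the on-the-fly cycle search can be realized in logarithmic space over a graph whose successor relation and whose ``occurs infinitely often in an accepting run'' membership are themselves defined through reachability and recurrence checks, which one resolves using closure of $\NLOGSPACE$ under complementation while keeping weights unary and walk lengths bounded by the vertex count.
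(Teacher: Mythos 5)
Your proposal is correct and follows essentially the same route as the paper: apply Lemma~\ref{l:techlical-bounded}, reduce both conditions to weighted-cycle detection on the graph of $k$-configurations (whose successor relation is computed on the fly via reachability checks), and derive the three complexity bounds from the graph's size and the weight encoding. The only differences are presentational—you spell out the ratio-to-sign reweighting $q\,W_R(e)-p\,I(e)$ and the lower-bound citations explicitly, which the paper's appendix leaves implicit—so this matches the intended proof.
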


\section{Extensions}
\label{s:extensions}
In this section we briefly discuss some extensions of the model of bidirectional NWA, i.e.,
we discuss the possibility of invoking multiple slave automata in one 
transition and two-way walking slave automata.

\begin{figure}
\centering
\begin{tikzpicture}
\newcommand{\drawLoop}[4]
{
\draw[rounded corners] (#1,#2) -- (#1+#3,#2) -- (#1+#3,#2-#4) -- (#1,#2-#4);		
}

\drawLoop{0}{0}{1}{0.3}
\drawLoop{0}{-0.3}{-1.5}{0.3}
\drawLoop{0}{-0.6}{1.3}{0.3}

\draw[->] (0,-0.9) -- (-1,-0.9	);

\node[draw,circle, fill,minimum size=0.1cm,inner sep=0] at (0,0) {};
\node[draw,circle, fill,minimum size=0.1cm,inner sep=0] at (0,-0.3) {};
\node[draw,circle, fill,minimum size=0.1cm,inner sep=0] at (0,-0.6) {};
\node[draw,circle, fill,minimum size=0.1cm,inner sep=0] at (0,-0.9) {};

\node at (1.5,-0.2) {$\slaveA_1$};
\node at (1.7,-0.8) {$\slaveA_3$};

\node at (-1.9,-0.45) {$\slaveA_2$};
\node at (-1.7,-0.9) {$\slaveA_4$};

\draw[densely dashed] (0,-1.3) -- (0,0.4);

\foreach \i in {1,...,18}
{
\node[rectangle,draw,minimum width=0.3cm,minimum height=0.3cm] at (0.3*\i-3,-1.5) {};
}

\end{tikzpicture}
\caption{A run of a two-way slave automaton $\slaveA$ and 
slave automata $\slaveA_1, \ldots, \slaveA_4$ simulating loops of $\slaveA$.}
\label{fig:twoway}
\end{figure}

\Paragraph{Invocation of multiple slave automata}.
The master automaton of an NWA invokes exactly one slave automaton at every transition. 
We can generalize the definition of NWA and allow the master automaton to invoke up to some 
constant $k$ slave automata at every transition. 
We call such a model $k$-NWA. 
First note that $k$-NWA contain NWA. 
Conversely, 
we briefly describe a reduction of the emptiness problem for $k$-NWA to the emptiness problem for  NWA.   
First, observe that without loss of generality we can assume that $k$-NWA always invokes exactly $k$-slave automata as it can 
always invoke a dummy slave automaton, which immediately accepts. 
Invocation of such a slave automaton is equivalent to taking a silent transition. 
Next, given a $k$-NWA $\nestedA$ with bidirectional slave automata over the alphabet $\Sigma$, we can construct an NWA $\nestedA'$ with bidirectional slave automata
over the alphabet $\Sigma \cup \{ \# \}$, which accepts words of the form $w[1] \#^{k-1} w[2] \#^{k-1} \ldots$. 
The runs of $\nestedA'$ on the word $w[1] \#^{k-1} w[2] \#^{k-1} w[3] \ldots$ correspond to all runs of $\nestedA$ on $w$;
the $\nestedA'$ invokes at letters $w[i] \#^{k-1}$ exactly $k$ slave automata which $\nestedA$ invokes at the corresponding transition over letter $w[i]$. 
Observe that the emptiness problems for $\nestedA$ and $\nestedA'$ coincide.
\smallskip

\Paragraph{Two-way walking slave automata}. 
For the ease of presentation we focus on bidirectional NWA where each slave automaton is either
forward walking or backward walking. 
However, in general, we can allow slave automata that change direction while running, i.e., 
allow two-way slave automata and obtain the same complexity results.
Observe that in case of two-way $\fsum^+$-automata, we can assume that such an automaton does not visit the same position in the same state twice. 
Indeed, such a cycle can be eliminated without increase of the value of the run. 
Thus, without loss of generality we assume that every two-way slave automaton visits every position at most $|\slaveA|$ times.
Therefore, instead of invoking a two-way slave $\fsum^+$-automaton the master automaton invokes  multiple forward-walking and backward walking slave automata,
two automata, a forward
walking $\slaveA_f$ and a backward walking $\slaveA_b$ such that $\slaveA_f$ (resp., $\slaveA_b$) 
simulates the run of $\slaveA$ past its invocation position (resp., before its invocation position).
This reduction shows that every $(\flimavg;\fsum^+)$-automaton $\nestedA$ with two-way walking slave automata is equivalent 
to a $2$-NWA with bidirectional slave automata. 
This reduction however involves exponential blow-up as slave automata $\slaveA_f, \slaveA_b$ can have exponential size in $\nestedA$. 
This follows from the fact that due to reversals each visited position by $\slaveA$ can be visited $|\slaveA|$ times in different states. 
To simulate that in one run, $\slaveA_f$ and $\slaveA_b$ have to simulate $|\slaveA|$ instances of $\slaveA$ in different states. 
This exponential blow-up can be avoided by dividing $\slaveA_f, \slaveA_b$ into multiple slave automata, 
each of which tracks only one loop in the run of $\slaveA$ as shown in Figure~\ref{fig:twoway} with automata $\slaveA_1, \ldots, \slaveA_4$. 
Each of these automata have to track at most two instances of $\slaveA$ and hence it involves only quadratic blow-up. 
The resulting automaton is an $|\nestedA|$-NWA as up to $\frac{|\slaveA|}{2}$ slave 
automata have to be invoked at every position. 
Still, the emptiness problems for $|\nestedA|$-NWA and for NWA have the same complexity and hence we conclude that 
allowing two-way slave automata does not increase the emptiness problem for $(\flimavg;\fsum^+)$-automata.

\section{Discussion and Conclusion}

\smallskip\noindent{\em Discussion.}
We established decidability of the emptiness problem for classes of bidirectional NWA, which include all 
NWA presented in the examples.
An NWA from Example~\ref{ex:energy} is covered by Theorem~\ref{th:regularBF},
while NWA from Examples~\ref{ex:dual-ART}, \ref{ex:data-consistency} and~\ref{ex:regret} are covered by 
Theorem~\ref{th:FiniteWidth}.
The lower bounds in the presented theorems follow from the lower bounds of the special case of forward-only NWA.
The established complexity bounds (Table~\ref{tab:complexity}) coincide with the bounds 
for forward-only NWA.

\begin{table} 
\centering
\begin{tabular}{|c|c|c|c|}
\hline
Value   & \multirow{2}{*}{Restrictions} & Complexity & Complexity \\
func. $g$ & & 
 Bidirectional & Forward case \\ 
\hline
$\fmin$,$\fmax$, & \multirow{2}{*}{None} & \textbf{$\PSPACE$-complete} & \multirow{2}{*}{$\PSPACE$-complete~\cite{nested}}\\
$\fBsum{} $ & & (Thm~\ref{th:regularBF}) & \\
\hline
 & \multirow{2}{*}{finite} &  \textbf{$\PSPACE$-hard} & \multirow{2}{*}{$\PSPACE$-hard} \\
$\fsum^+$ & \multirow{2}{*}{width} & \textbf{$\EXPSPACE$}  & \multirow{2}{*}{$\EXPSPACE$~\cite{nested}} \\
 & & (Thm~\ref{th:FiniteWidth}) & \\
\hline
$\fsum^+$, & constant width & \textbf{$\NLOGSPACE$-complete} & \multirow{1}{*}{$\NLOGSPACE$-complete} \\
$\fsum$&  unary weights & (Thm~\ref{th:BoundedBF}) & \cite{nested-mfcs} \\
\hline
$\fsum^+$, & constant  width& \textbf{$\PTIME$} & \multirow{2}{*}{$\PTIME$~\cite{nested-mfcs}} \\
$\fsum$&  binary weights & (Thm~\ref{th:BoundedBF}) & \\
\hline
$\fsum^+$, & width given & \textbf{$\PSPACE$-complete} & \multirow{2}{*}{$\PSPACE$-complete~\cite{nested-mfcs}} \\
$\fsum$&  in unary & (Thm~\ref{th:BoundedBF}) & \\
\hline
\end{tabular} 
\caption{The complexity of the emptiness problem for $(\flimavg;g)$-automata. The columns describe respectively: the value function $g$, 
additional restrictions imposed on the problem, the complexity in the case with bidirectional slave automata, and
the complexity in the previously studied~\cite{nested,nested-mfcs} case with only forward-walking slave automata. 
Results presented in this paper are boldfaced. }
\label{tab:complexity}
\end{table}

\smallskip\noindent{\em Concluding remarks.}
In this work we present bidirectional NWA as a specification formalism for quantitative 
properties. In this formalism  many natural quantitative properties can be expressed, and 
we present decidability and complexity results for the basic decision problems.
There are several interesting directions for future work.
The study of bidirectional NWA with other value functions is an interesting direction.
The second direction of future work is to consider other formalism (such as a logical 
framework) which has the same expressive power as bidirectional NWA.

{\footnotesize
\subparagraph*{Acknowledgements.}
This research was supported in part by the Austrian Science Fund (FWF) under grants S11402-N23,S11407-N23 (RiSE/SHiNE) and Z211-N23 (Wittgenstein Award), 
ERC Start grant (279307: Graph Games), Vienna Science and Technology Fund (WWTF) through project ICT15-003 and
by the National Science Centre (NCN), Poland under grant 2014/15/D/ST6/04543.
}

\bibliography{papers}

\newpage
\appendix
\section*{Appendix}

In the appendix we recall statements of theorems and lemmas from the main body of the paper keeping their original numbering. 
Lemmas introduced in the appendix have subsequent numbers. For this reason, the numbering  of theorem and lemmas in 
the appendix is mixed. 

\section{Proofs from Section~\ref{s:decision}}
\RegularForwardAndBackward*

The $\PSPACE$-hardness part follows from $\PSPACE$-hardness of the emptiness problem for 
 $(\flimavg;g)$-automata with forward-walking slave automata only~\cite{nested}. Therefore, we focus on 
the containment in $\PSPACE$. We begin with a definition of a unifying framework of regular value functions. 

\Paragraph{Regular weighted automata and regular value functions}. 
Following~\cite{nested}, we say that a weighted automaton $\aut$ over finite words is a \emph{regular weighted automaton} 
if and only if
there is a finite set of rationals $\{ q_1, \ldots, q_n\} $ and 
there are regular languages ${\lang}_1, \ldots, {\lang}_n$
such that 
\begin{compactenum}[(i)]
\item every word accepted by $\aut$ belongs to $\bigcup_{1 \leq i \leq n} {\lang}_i$, and
\item for every $w \in {\lang}_{i}$, its value w.r.t. $\aut$ is $q_i$.
\end{compactenum}
A value function $f$ is a \emph{regular value function}
if and only if all $f$-automata are regular weighted automata.
Examples of regular value functions are $\fmin,\fmax$ and variants of the bounded sum $\fBsum{B}$ with regular conditions, i.e., the partial sum of every prefix, suffix, infix of the run belongs to interval $L,U$, e.t.c.
Having the definition of regular value function, we can easily check whether our variant of the bounded sum is admissible.

We define the \emph{description size} of a given regular weighted automaton $\aut$,
as the size of automata $\aut_1, \ldots, \aut_n$ recognizing languages 
 ${\lang}_1, \ldots, {\lang}_n$ that witness $\aut$ being a regular weighted automaton.

\begin{lemma}
Let $g$ be a regular value function. 
Every $(\flimavg;g)$-automaton $\nestedA$ with bidirectional slave automata is equivalent to an exponential-size 
$\flimavg$-automaton $\nonnestedA$. The automaton $\nonnestedA$ can be constructed implicitly in polynomial time.
\label{l:regularEquivalent}
\end{lemma}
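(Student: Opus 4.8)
The plan is to eliminate the slave automata one invocation at a time, replacing the value returned by each slave with an explicit weight emitted by the master at the position of invocation, so that applying $\silent{\flimavg}$ to the resulting weight sequence reproduces the NWA semantics. First I would unfold the regularity of $g$: for each slave automaton $\slaveA_j$ the definition of a regular value function supplies a finite set of rationals $q^j_1,\ldots,q^j_{n_j}$ and regular languages $\lang^j_1,\ldots,\lang^j_{n_j}$, recognised by DFAs of size at most the description size of $\slaveA_j$, such that the subword processed by an accepting run of $\slaveA_j$ lies in $\lang^j_c$ exactly when that run has value $q^j_c$. Thus the weight that $\nonnestedA$ must emit at a transition invoking $\slaveA_j$ is some $q^j_c$ witnessed by a membership in $\lang^j_c$; invocations of the silent slave $\slaveA_0$ are emitted as $\bot$ and ignored by $\silent{\flimavg}$. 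The backward and forward directions are then treated by different devices.

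For backward-walking slaves no guessing is needed, because the subword read by a slave invoked at position $i$ lies entirely in the already-processed prefix $w[1,i]$. I would reverse each DFA, so that membership $w[i']\cdots w[i]\in\lang^j_c$ becomes a forward run of the reversed DFA over $w[i'],\ldots,w[i]$, and then run all such reversed DFAs forward inside $\nonnestedA$ by a subset construction: $\nonnestedA$ maintains, for every backward type $j$ and value index $c$, the set of reversed-DFA states reachable over all possible start positions up to the current one. At a transition invoking $\slaveA_j$, the master reads off from this set which values $q^j_c$ are achievable (a reversed accepting state is present) and nondeterministically emits one of them as the weight. This simulation is exact, so the infimum is preserved, and the state blow-up is a single exponential.

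For forward-walking slaves the value depends on the unread suffix, so I would guess and verify: at a transition invoking $\slaveA_j$, $\nonnestedA$ nondeterministically picks an index $c$, emits $q^j_c$ as the weight, and spawns a verification obligation that runs the DFA for $\lang^j_c$ forward, discharging the obligation when (and only when) the DFA reaches an accepting state on some prefix of the remaining word. Since the width is unbounded, many obligations coexist, so I would track them by a subset construction over DFA states, augmented with a Miyano--Hayashi-style breakpoint so that a B\"uchi condition can certify that every pending obligation is eventually discharged. The overall acceptance condition of $\nonnestedA$ conjoins the master's acceptance, discharge of all forward obligations, consistency of the backward simulation, and the requirement that infinitely many non-silent transitions occur, as demanded by $\silent{\flimavg}$.

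Finally, $\nonnestedA$ is the product of the master, the backward subset component, and the forward obligation component, which is exponential in $|\nestedA|$; its states and its transition/weight relation have polynomial-size descriptions and are computable in polynomial time, yielding the implicit polynomial-time construction. Correctness reduces to a value-preserving correspondence between accepting runs of $\nestedA$ and of $\nonnestedA$: soundness, that verification rejects any guessed value no slave run can realise, so that no spuriously small weights survive; and completeness, that the subset constructions are exact, so every value achievable by an accepting slave run is available as an emitted weight. I expect the main obstacle to be precisely the forward direction under unbounded width: certifying with finitely many states that each of the unboundedly many pending obligations reaches an accepting state in finite time, while ensuring that this breakpoint acceptance neither discards a legitimate run (harming completeness of the infimum) nor admits a run whose emitted weights are not jointly realisable by a single accepting run of $\nestedA$ (harming soundness).
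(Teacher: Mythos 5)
Your proposal is correct and takes essentially the same route as the paper: the paper likewise eliminates slave automata by nondeterministically guessing the value of each forward-walking slave at invocation and verifying it via a subset construction with a two-set breakpoint acceptance (its components $F_1,F_2$, emptied infinitely often — exactly your Miyano--Hayashi device), while backward-walking slaves are handled without guessing by simulating their runs in reverse inside the product state, giving an exponential-size $\flimavg$-automaton constructed implicitly in polynomial time. The only cosmetic difference is in the backward component: the paper's appendix nondeterministically injects accepting states into a set $B$ and checks the invoked slave's state there, whereas you determinize the reversed DFAs and track all reachable states; both are single-exponential subset constructions with the same correctness argument and bounds.
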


Since the emptiness problem for $\flimavg$-automata can be solved in $\NLOGSPACE$, 
Lemma~\ref{l:regularEquivalent} implies Theorem~\ref{th:regularBF}. Moreover, Remark~\ref{rem:parametricRegular}
follows directly from the construction in the following proof.

\begin{proof}
\newcommand{\QF}{Q_{SF}}
\newcommand{\QB}{Q_{SB}}
Let $Q_m$ be the set of states of the master automaton of $\nestedA$.
Since $g$ is a regular value function, every slave automaton $\slaveA$ has a finite range $R_{\slaveA}$ and for each value $v \in R_{\slaveA}$ there exists a finite-state automaton 
$\aut_{\slaveA,v}$ recognizing the set of words of value $v$. 
Let $\QF$ (resp., $\QB$) be the union of the sets of states of all automata $\aut_{\slaveA,v}$, where $\slaveA$ if a forward-walking (resp., backward-walking) slave automaton and $v \in R_{\slaveA}$.
We assume that sets of states of automata $\aut_{\slaveA,v}$ are disjoint.	

We define a $\flimavg$-automaton $\nonnestedA$ with generalized \buchi{} condition as follows. 
The set of states of $\nonnestedA$ is $Q_m \times 2^{\QF} \times 2^{\QF} \times 2^{\QB}$.
States of $\nonnestedA$ are of the form $(q,F_1, F_2,  B)$, whose objectives are as follows.

Component $q$ is used to simulate the run of the master automaton. 
Components $F_1, F_2$ are used to simulate finite-state automata corresponding to forward-walking slave automata.
Accepting states correspond to termination of a slave automaton and hence they are removed from $F_1, F_2$.
Every newly invoked forward-walking slave automaton is added to component $F_2$, i.e., automaton $\nonnestedA$ picks a transition invoking slave automaton $\slaveA$ and picks weight of this transition $v$,
then it adds to $F_2$ an initial state of finite-state automaton $\aut_{\slaveA, v}$.  The weight of such a transition is $v$,

If every slave automaton has finite run than $F_1$ becomes empty at some point of time.
Then, we move all states from $F_2$ to $F_1$, and put $F_2 = \emptyset$. 
Observe that runs of all forward-walking slave automaton are finite if and only if $F_1$ is empty infinitely often.

Component $B$ is used to simulate finite-state automata corresponding to  backward-walking slave automata.
Accepting states of backward-walking slave automata correspond to their termination. 
Since $\nonnestedA$ moves in the opposite direction w.r.t. $\nestedA$, the automaton $\nonnestedA$ 
adds to component $B$ some subset of accepting states from $\QB$.
There is only one position in the run of $\nonnestedA$ at which it adds states to $B$.
Whenever a backward-walking slave automaton is invoked in a state $q_i$, we require that
$q_i$ belongs to $B$. This state $q_i$ may be removed from the corresponding set but does not have to. 
Removal corresponds to a situation when only a single slave automaton is in the state $q_i$, while leaving $q_i$ in $B$
corresponds the situation when more than one slave automaton is in state $q_i$.

Due to the construction we have (i)~for every run of $\nestedA$, the automaton $\nonnestedA$ has a run of the same weight, and conversely
(ii)~for every run of $\nonnestedA$, there exists a run $\nestedA$ of the same value. 
\end{proof}
 
\section{Proofs from Section~\ref{s:finite}}

	\newcommand{\Qslv}{Q_{\textrm{slv}}}	
\newcommand{\confNum}[1]{\mathsf{conf}({#1})}
\newcommand{\mult}{{\mathsf{mult}}}
\newcommand{\boundOnMulti}{\mathbf{N}}
\newcommand{\cond}{\mathbb{C}}

\subsection{The proof of Theorem~\ref{th:FiniteWidthDecidable}}

\FiniteWidthDecidable*
\begin{proof}
\newcommand{\QslvF}{Q_{slv}^F}
\newcommand{\QslvB}{Q_{slv}^B}
Let $\QslvF$ (resp., $\QslvB$) be the set of states of forward-walking (resp., backward-walking) slave automata of $\nestedA$ and let $Q_m$ be the set of states of the master automaton of $\nestedA$.
We define a (generalized) \buchi{}-automaton (with no weights) $\nonnestedA$ as follows. 
The set of states of $\nonnestedA$ is $Q_m \times 2^{\QslvF} \times 2^{\QslvF} \times 2^{\QslvB} \times 2^{\QslvB}$.
Initially $\nonnestedA$ starts with $(q,\emptyset,\emptyset,\emptyset,\emptyset)$, where $q$ is some initial state of the master automaton.
We use sets of states to simulate runs of slave automata and to ensure that every forward-walking slave automaton runs for finitely many steps. 
We treat backward-walking slave automata in a similar way to forward-walking slave automata except that backward-walking slave automata are started at the termination step
of the corresponding slave automata, and they can terminate (which correspond to invocation) multiple, but finitely many times. 
More precisely, states of $\nonnestedA$ are of the form $(q,F_1, F_2, B_1, B_2)$, whose objectives are as follows.

Component $q$ is used to simulate the run of the master automaton. Components $F_1, F_2$ are used to simulate forward-walking slave automata.
Accepting states correspond to termination of a slave automaton and hence they are removed from $F_1, F_2$.
A newly invoked forward-walking slave automaton is added to component $F_2$. Thus, if every slave automaton has finite run than $F_1$ becomes empty at some point of time.
Then, we move all states from $F_2$ to $F_1$, and put $F_2 = \emptyset$. Observe that runs of all forward-walking slave automaton are finite if and only if $F_1$ is empty infinitely often.

Components $B_1, B_2$ are used to simulate backward-walking slave automata.
Component $B_1$ contains the states of slave automata that all finish at the same position, while $B_2$ contains states of other slave automata. 
We require that $B_1$ and $B_2$ are disjoint at every position.
Accepting states of backward-walking slave automata correspond to their termination. 
However, as the simulating automaton $\nonnestedA$ moves in the opposite direction, it guesses 
at every step whether some (backward-walking) slave automaton
have been terminated at the current position and it may add some accepting states to $B_1$ or $B_2$.
There is only one position in the run of $\nonnestedA$ at which it adds states to $B_1$.
Whenever a backward-walking slave automaton is invoked in a state $q_i$, we require that
$q_i$ belongs to $B_1$ or $B_2$. This state $q_i$ may be removed from the corresponding set but does not have to. 
Removal corresponds to a situation when only a single slave automaton is in the state $q_i$, while leaving $q_i$ in $B_1$ or $B_2$
corresponds the situation when more than one slave automaton is in state $q_i$.
 
Consider an infinite run $\pi$ of $\nonnestedA$, in which
(a)~component $q$ equals infinitely often to some accepting state of the master automaton,
(b)~component $F_1$ is empty infinitely often, and
(c)~component $B_1$ is never empty and infinitely often contains an initial state of the slave automaton invoked at the current position.
The run $\pi$ corresponds to an accepting run of $\nestedA$ in which infinitely many backward-walking slave automata terminate at the same position (which is the position at which $B_1$ becomes non-empty for the first time).
Conversely, having a run of $\nestedA$ of infinite width, the corresponding run of $\nonnestedA$ satisfies (a), (b) and (c).
Checking existence of a run of $\nonnestedA$ satisfying  (a), (b) and (c) can be done in non-deterministic logarithmic space. 
Thus, checking whether $\nestedA$ has infinite width can be done in polynomial space. 

We show $\PSPACE$-hardness of checking finite-width, by reduction from the emptiness problem for NWA with forward-walking slave automata only, which is $\PSPACE$-complete~\cite{nested}.
Let $\nestedA$ be an NWA with forward-walking slave automata only over the alphabet $\Sigma$. We extend the alphabet $\Sigma$ by $\$,\#$.
Next, we construct an NWA $\nestedA'$ whose every run has infinite width and which accepts precisely words of the form $\$ w[1] \# w[2] \# $ such that $\nestedA$ accepts $w[1] w[2] \ldots$. 
Basically, forward-walking slave automata of $\nestedA$ ignore letters $\#$ and  every transition $(q,a,q')$ of the master automaton of $\nestedA$ is replaced with two transitions 
$(q,a,q'_{\#})$ and $(q'_{\#}, \#,q')$. On the first transition, $\nestedA'$ invokes the same automaton as $\nestedA$ and on transitions labeled by $\#$, the automaton $\nestedA'$ invokes a backward-walking slave automaton
that runs until $\$$. Now, if $\nestedA$ does not have an accepting run, $\nestedA'$ does not have an accepting run and it has trivially finite width.
Otherwise, if $\nestedA$ accepts $w$, the automaton $\nestedA'$ accepts $\$ w[1] \# w[2] \# $ and it has a run of infinite width on it.
Therefore, $\nestedA$  has an accepting run if and only if $\nestedA'$ does not have finite width.

\end{proof}

\subsection{The proof of Theorem~\ref{th:FiniteWidth}}

\ForwardAndBackward*

The $\PSPACE$-hardness part follows from $\PSPACE$-hardness of the emptiness problem for 
 $(\flimavg;\fsum^+)$-automata with forward-walking slave automata only~\cite{nested}. Therefore, we focus on 
the containment in $\EXPSPACE$. 

\Paragraph{Overview}.
The proof is by reduction to the bounded-width case, which is decidable due to Theorem~\ref{th:BoundedBF}. 
First, we show that without loss of generality we can assume that a given NWA $\nestedA$ is  deterministic (Lemma~\ref{WLOGDeterministicLimAvgSum}).
Next, we define words, called barriers, upon which all active (backward- and forward-walking) slave automata terminate.
We show that for finite-width NWA such words do exist (Lemma~\ref{l:barriersExist}). 
Properties of barriers ensure that if as some position $i$ in the input word, exponentially many slave automata accumulate
exponential values, then inserting a barrier actually decreases the partials sum of values returned by slave automata (Lemma~\ref{l:barrierDecreasesPartialSum}), and hence we can insert barriers even at infinitely many positions and the value of the resulting word does not exceed the value of the original word.
Thanks to barriers, we can show that for every word $w$, there exists a word $w'$ such that at every position at most exponentially many slave automata aggregate 
exponential values and  the value of $w'$ does not exceed the value of $w$. 
Slave automata which aggregate bounded (exponential) values can be eliminated, i.e., we construct an NWA $\nestedA'$ which simulates $\nestedA'$ in such a way that 
runs of slave automata that accumulate at most exponential values are compressed into a single transition. Observe that $\nestedA'$ on word $w'$ as above has exponential width.
Hence, the infimum of $\nestedA$ over all words coincides with the infimum of $\nestedA'$ over words on which it has a run of exponential width.
We can encode the bound on width into $\nestedA'$ and decide the emptiness problem of $\nestedA'$ in non-deterministic logarithmic space (Theorem~\ref{th:BoundedBF}).
The size of $\nestedA'$ is doubly-exponentially bounded in the size of $\nestedA$, and hence the emptiness problem for finite-width $(\flimavg;\fsum^+)$-automata with bidirectional slave automata is in $\EXPSPACE$.
\smallskip

\Paragraph{Configuration and multiplicities}.
Invocation of a slave automaton in an NWA is a form of a universal transition 
in the sense of alternating automata. 
We adapt the power-set construction, which is used to convert alternating
automata to non-deterministic automata, to the NWA case. 
Given a (non-deterministic) NWA $\nestedA$ with bidirectional slave automata, we define \emph{configurations} and \emph{multiplicities} of $\nestedA$
as follows.
Let $\Qslv$ be the disjoint union of the sets of states of all slave automata of $\nestedA$.
For a run of $\nestedA$, we say that $(q_m, A)$ is the \emph{configuration} at position $p$ if
$q_m$ is the state of the master automaton at position $p$
and $A \subseteq \Qslv$ is the set of states of slave automata at position $p$.
We denote by $\confNum{\nestedA}$ the number of configurations of $\nestedA$.
We define the \emph{multiplicity} $\mult$ at position $p$ as the function
$\mult : \Qslv \mapsto \N$, such that $\mult(q)$ specifies the number of 
slave automata in the state $q$ at position $p$.
The configuration together with the multiplicity give a complete 
description of the state of $\nestedA$ at position $p$.
\smallskip

We observe that 
without loss of generality we can assume that NWA are deterministic. 
Basically, non-deterministic choices of the master automaton and slave automata can be encoded in the input alphabet. 
More precisely, the proof consists of two steps. 
First, we define simple runs as follows.
A run of an NWA is \emph{simple} if at every position in the run slave automata that are in the same state take the same transition.
We show that (i)~for every run $(\masterRun, \slaveRun_1, \slaveRun_2, \dots)$ of $\nestedA$ there exists
a \emph{simple} run of $\nestedA$ of the value not exceeding the value of $(\masterRun, \slaveRun_1, \slaveRun_2, \dots)$. 
Second, we show that (ii)~there exists a deterministic $(\flimavg; \fsum^+)$-automaton $\nestedA'$ over
an extended alphabet such that 
the sets of accepting simple runs of $\nestedA$ and accepting runs of $\nestedA'$
coincide and each run has the same value in both automata.
 
The proof of the following lemma  is virtually the same as in the case of NWA with forward-walking slave automata only~\cite{nested}.

\begin{restatable}{lemma}{WLOGDeterministicLimAvgSum}	
\label{WLOGDeterministicLimAvgSum}
Given a $(\flimavg; \fsum^+)$-automaton $\nestedA$ over $\Sigma$ with bidirectional slave automata,
(i)~for every run of $\nestedA$, there exists a simple run of at most the same value, and
(ii)~one can compute in polynomial space a deterministic $(\flimavg; \fsum^+)$-automaton $\nestedA'$
over an alphabet $\Sigma \times \Gamma$ such that:
(1)~$\inf_{w \in \Sigma^+} \valueL{\aut}(w) = \inf_{w' \in (\Sigma \times \Gamma)^+} \valueL{\aut'}(w')$, and
(2)~$\confNum{\nestedA} = \confNum{\nestedA'}$.
\end{restatable}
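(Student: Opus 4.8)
The plan is to prove the two parts along the lines of the forward-only construction of~\cite{nested}, the only genuinely new ingredient being the handling of backward-walking slaves in part~(i). For~(i) I would fix an accepting run $(\masterRun,\slaveRun_1,\slaveRun_2,\ldots)$, keep the master run $\masterRun$ (hence the invocation positions and the type of slave invoked at each position) unchanged, and rewrite only the slave runs. The crucial observation is that two slaves \emph{of the same type} sitting in the same state $q$ at the same position $p$ have identical futures: a forward-walking slave at $p$ will read $w[p],w[p+1],\ldots$ and a backward-walking slave at $p$ will read $w[p-1],w[p-2],\ldots$, so in both cases the set of possible continuations depends only on $(q,p)$ and the fixed word, not on where the slave was invoked. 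Since the state sets of distinct slave automata are disjoint inside $\Qslv$, ``same state'' already forces ``same type,'' so this dichotomy is exhaustive.

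Given this, I would define for each type, state $q$ and position $p$ the quantity $\mathrm{opt}(q,p)$, the minimum $\fsum^+$-value over all \emph{accepting} continuations from $(q,p)$ in the relevant reading direction; this minimum is attained because $\fsum^+$ is monotone along a run and the original slave run exhibits at least one accepting continuation. Fixing once and for all a single value-optimal transition $\tau(p,q)$ on a Bellman-optimal path --- so that $\mathrm{opt}(q,p)=|\cost(q,a,q')|+\mathrm{opt}(q',p')$ for $(q,a,q')=\tau(p,q)$ and $p'=p\pm1$ according to direction --- yields exactly one transition per $(p,q)$; forcing every slave to obey $\tau$ then produces a run that is \emph{simple} by construction, because any two co-located slaves in the same state automatically take $\tau(p,q)$. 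Each slave now returns $\mathrm{opt}(q_0,i)$ from its invocation $(q_0,i)$, which is at most its original value, so the sequence $v(\slaveRun_1)v(\slaveRun_2)\ldots$ drops pointwise and $\silent{\flimavg}$ of it does not increase. Acceptance is preserved since $\tau$ traces an accepting continuation; the one piece of bookkeeping is to restrict the minimum to continuations of the same silent/non-silent status as the original, so that the ``infinitely many non-silent transitions'' condition survives.

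For~(ii) I would externalize every remaining nondeterministic choice into a second alphabet component $\Gamma$. A letter of $\Gamma$ names, for the current position, the master transition together with a transition (or termination) for \emph{each} state of $\Qslv$; the automaton $\nestedA'$ over $\Sigma\times\Gamma$ then reads $(\sigma,\gamma)$, uses $\sigma$ as the real input letter and consults $\gamma$ to resolve all choices, so the master follows the dictated transition and each slave in state $q$ deterministically applies $\gamma$'s transition for $q$ (rejecting if $\sigma$ is not the expected letter). This makes $\nestedA'$ deterministic, and because every choice lives in the letter rather than in the state space, $\nestedA'$ reuses the master and slave state spaces of $\nestedA$; hence the two automata realize the same configurations and $\confNum{\nestedA}=\confNum{\nestedA'}$, while the simple runs of $\nestedA$ are in value-preserving bijection with the runs of $\nestedA'$ (choose $\gamma$ to record the simple run's decisions), giving equality of the two infima. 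The determinism requirement that accepting slave states be sinks I would meet by the standard preprocessing that splits a ``continue after acceptance'' state off as a fresh non-accepting copy, applied to $\nestedA$ at the outset so that it does not disturb the configuration count; the transition function of $\nestedA'$ is computable position-by-position in polynomial space even though $|\Gamma|$ is exponential.

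The hard part will be part~(i) for backward-walking slaves: one must check carefully that the reverse-reading future of a backward slave is indeed pinned down by its current state and position alone, so that the single consistent choice $\tau$ simultaneously merges co-located slaves and stays value-optimal, and that the silent/non-silent bookkeeping keeps the global acceptance condition intact. Once this is secured the argument is symmetric to the forward-only case, and the remainder of~(ii) is routine alphabet-encoding together with the accepting-state-splitting gadget needed to satisfy determinism.
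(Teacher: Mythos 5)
Your part~(ii) is essentially the paper's own construction: the second alphabet component is a (partial) function dictating the master transition and one transition per slave state, the original state spaces are reused so the number of configurations is unchanged, and simple runs of $\nestedA$ correspond value-preservingly to runs of $\nestedA'$. That half is sound, up to one small inaccuracy: the gadget that splits an accepting state with outgoing transitions into an accepting sink plus a non-accepting copy \emph{does} enlarge the slave state space, hence can change the number of configurations; this is harmless for how the lemma is used downstream (the bound $\boundOnMulti$ stays singly exponential), but your claim that the preprocessing ``does not disturb the configuration count'' is not literally true.

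The genuine gap is in part~(i), and it is not where you expect it: backward-walking slaves are the unproblematic case (their continuations are bounded in length by the invocation position, so any Bellman-consistent $\tau$ terminates), whereas your construction can fail for \emph{forward}-walking slaves. Bellman consistency, $\mathrm{opt}(q,p)=|\cost(q,a,q')|+\mathrm{opt}(q',p')$, does not imply that iterating $\tau$ yields a finite run, so the assertion ``$\tau$ traces an accepting continuation'' is false in general. Concretely, take a forward slave with a non-accepting state $q$ that has a weight-$0$ self-loop on every letter and a weight-$0$ transition to an accepting sink $f$. Then $\mathrm{opt}(q,p)=0$ for every $p$, and the self-loop satisfies the Bellman equation at every position ($0=0+0$), so $\tau$ may legally choose the self-loop everywhere; the induced run never terminates, hence is not an accepting slave run at all, and the whole rewritten run of $\nestedA$ is invalid. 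The repair is exactly the tie-break that the paper builds into its argument (there phrased as iterated pairwise suffix-merging: replace both suffixes by the one with ``the smaller sum, and in case of the equal sum \ldots the shorter''): optimize lexicographically by $\fsum^+$-value and then by length, and let $\tau(q,p)$ be the first step of a value-then-length-minimal accepting continuation, or ``stop'' when the empty continuation attains the minimum. Suffix-optimality then gives that the minimal length strictly decreases along $\tau$, so all induced runs terminate in accepting states with value $\mathrm{opt}$. With this amendment (plus your silent/non-silent bookkeeping at invocation states) your one-shot dynamic-programming argument does go through, and is arguably more direct than the paper's iterated pairwise replacement — but as written it is incomplete on precisely the point that makes the paper's tie-breaking rule necessary.
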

\begin{proof}
(i): Consider a run $(\masterRun, \slaveRun_1, \slaveRun_2, \dots)$ of $\nestedA$.
Suppose that $\pi_i, \pi_{j}$ are runs of the same slave automaton $\slaveA$ invoked at positions $i$ and $j$, such that its both instances are 
in the same state at position $s$ in the word, i.e., $\pi_i[i'] = \pi_j[j']$,  where
$i', j'$ are the positions in $\pi_i, \pi_j$ corresponding to the position $s$ in $w$.
We pick from the suffixes $\pi_i[i', |\pi_i|], \pi_j[j', |\pi_j|]$ the one with the smaller sum, and in case of the equal sum we pick the shorter.
Then, we change the suffixes of both runs to the picked one. 
Such a transformation does not increase the value of the partial sums and
does not introduce infinite runs of slave automata. 
Indeed, a run of each slave automaton can be changed by such an operation only finitely many times.
Thus, this transformation can be applied to any pair of slave runs to 
obtain a simple run of the value not exceeding the value of $(\masterRun, \slaveRun_1, \slaveRun_2, \dots)$.

\newcommand{\Qall}{Q_{\textrm{all}}}
(ii): 
Without loss of generality, we can assume that  for every slave automaton in $\nestedA$
final states have no outgoing transitions. 
Let $\Qall$ be the disjoint union of 
the sets of states of the master automaton and all slave automata of $\nestedA$.
We define $\Gamma$ as the set of all partial functions $h : \Qall \mapsto \Qall$.
We define a $(\flimavg; \fsum^+)$-automaton $\nestedA'$ over the alphabet $\Sigma \times \Gamma$ 
by modifying only the transition relations and labeling functions of the master automaton and slave automata of $\nestedA$; 
the sets of states and accepting states are the same as in the original automata.
The transition relation and the labeling function of the master automaton $\masterA'$ of
$\nestedA'$ are defined as follows: 
$(q, \lpair{a}{h},q')$ is a transition of $\masterA'$ if and only if $h(q) = q'$ and $\masterA$ has the transition $(q,a,q')$.
The label of the transition $(q, \lpair{a}{h},q')$ is the same as the label 
of the transition $(q,a,q')$ in $\masterA$.  Similarly, 
for each slave automaton $\slaveA_i$ in $\nestedA$, the transition relation and the labeling function of the corresponding
slave automaton $\slaveA_i'$ in $\nestedA'$ are defined as follows:
$(q, \lpair{a}{h},q')$ is a transition of  $\slaveA_i'$ if and only if $h(q) = q'$ and $\slaveA_i$ has the transition $(q,a,q')$.
The label of the transition $(q, \lpair{a}{h},q')$ is the same as the label 
of the transition $(q,a,q')$ in $\slaveA_i$. 

First, we see that $\confNum{\nestedA} = \confNum{\nestedA'}$.
Second, observe that the master automaton $\masterA'$ and all slave automata $\slaveA_i'$
are deterministic. Moreover, since we assumed that for every slave automaton in $\nestedA$
final states have no outgoing transitions, slave automata $\slaveA_i'$ recognize 
prefix free languages. 
Finally, it follows from the construction that
(i)~every simple run $(\masterRun, \slaveRun_1, \slaveRun_2, \dots)$ of $\nestedA$ is a
 run of $\nestedA'$ of the same value. Basically, we encode in the input word all transitions in functions $h \in \Gamma$.
The value of each transition is the same by the construction.
Conversely, (ii)~every run $(\masterRun, \slaveRun_1, \slaveRun_2, \dots)$ of $\nestedA'$ is a
simple run of $\nestedA$ of the same value. Indeed, the fact that transitions are
directed by functions $h \in \Gamma$ implies that the run is simple.
\end{proof}

In the following definition we introduce \emph{barriers}, which are words on which all active slave automata terminate, i.e., if $u$ is a barrier, then
forward-walking slave automata terminate while reading $u$, and backward-walking slave automata terminate while reading $u^R$ (word $u$ from right to left).
Barriers have additional properties, which allow us tho show that if exponentially many slave automata accumulate exponential values, 
then inserting a barrier decreases multiplicities of slave automata and does not increase the partial sums of values returned by slave automata.
		
\begin{definition}[Barriers]
\label{def:barriers}
Let $w$ be an infinite word, $i$ be a position and $k > i$ be the first position such that all backward-walking slave automata invoked past $k$ terminate past $i$.
A finite word $u$ is a barrier at $i$ in $w$ if in word $w' = w[1,i] u w[i+1, \infty]$ we have
\begin{enumerate}[BC1]
\item backward-walking slave automata invoked past position $i+|u|$ terminate past position $i$ (between positions $i+1$ and $i+|u|$) 
\item forward-walking slave automata invoked before position $i$  terminate before position $i+|u|$, 
\item the configurations at $i$ and $i + |u|$ in $w'$ are the same as the one at $i$ in $w$,  
\item the length of $u$ is bounded by $\boundOnMulti = (|Q_s + 2|) \cdot \confNum{\nestedA} \cdot |Q_s|^{2 \cdot |Q_s|}$,
\item 
for every state $q$ of some backward-walking (resp., forward-walking) slave automaton, 
the multiplicity $mult(q)$ at $i$ in $w'$ (resp., $\mult(q)$ at $i+|u|$ in $w'$) is bounded by $mult(q)$ at $i$ in $w$, and
\item every backward-walking (resp., forward-walking) slave automaton active at position $i$,
accumulates over $w[1,i]$ (resp., $w[i,\infty]$) a value greater or equal to the value accumulated over $w[1,i]u$ (resp., $uw[i+1,\infty]$).
\end{enumerate}
\end{definition}

The above conditions simply state that a barrier terminates all slave automata active and reduces their multiplicities, i.e.,
the multiplicity of backward-walking slave automata, which are invoked in the suffix $w[i+1,\infty]$ is reduced by word $u$ (BC1) and
the multiplicity of forward-walking slave automata invoked in prefix $w[1,i]$ is reduced by word $u$ (BC2). 
Property BC3 ensures
that inserting  a barrier at position $i$ does not change the run essentially (except for $u$ it only reduces the multiplicities of slave automata).
These properties and the bound on the length of barriers (BC4) allow us to reduce the multiplicity of slave automata along words. 
Properties BC5, BC6 are necessary to show that such a reduction of multiplicities does not increase the values of words.

\begin{lemma}
\label{l:barriersExist}
Let $\nestedA$ be a deterministic $(\flimavg; \fsum^+)$-automaton of finite width with 
 bidirectional slave automata. Then, for every word $w$,
 at almost every position a barrier exists. 
\end{lemma}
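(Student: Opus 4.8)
The plan is to produce, for all but finitely many positions $i$, a barrier $u$ directly, by constructing a bounded word that simultaneously terminates every slave automaton that would otherwise cross position $i$ and that reinstates the configuration of position $i$. First I would fix the run of $\nestedA$ on $w$ (well-defined since, by Lemma~\ref{WLOGDeterministicLimAvgSum}, we may assume $\nestedA$ deterministic) and read off, at the chosen position $i$, the configuration $(q_m, A)$. Write $A = S_F \cup S_B$, where $S_F \subseteq \Qslv$ collects the states of the forward-walking slave automata invoked in $w[1,i]$ still active at $i$, and $S_B \subseteq \Qslv$ collects the states of the backward-walking slave automata invoked in $w[i+1,\infty]$ that reach position $i$. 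The crucial consequence of the finite-width hypothesis is that $S_B$ is finite: if infinitely many backward-walking automata reached $i$, they would all be active at $i$, contradicting finite width. Thus, although the multiplicities of active automata may grow without bound as $i \to \infty$, the \emph{sets} $S_F$ and $S_B$ remain subsets of the finite set $\Qslv$, so a single word can be responsible for terminating all copies at once.

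Next I would build $u$ in two independently analysed halves, $u = u_F\, u_B$. For the forward direction I want $u_F$, read left to right, to drive every automaton whose state lies in $S_F$ into an accepting (hence absorbing) state, yielding BC2; for the backward direction I want $u_B$, read right to left, to do the same for $S_B$, yielding BC1. Existence of such words is automatic from the accepting run on $w$: the forward automata active at $i$ all terminate on the suffix $w[i+1,\infty]$, so some finite prefix of it terminates all of $S_F$ at once, and symmetrically the backward automata active at $i$ all terminate while reading $w[1,i]$ backwards. To obtain the length bound BC4 I would pump on the \emph{subset-evolution}: tracking the set of not-yet-terminated states as letters are consumed gives a deterministic trajectory through $2^{|\Qslv|}$ subsets, and excising any repeated subset yields a shorter terminating word. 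This bounds $|u_F|$ and $|u_B|$ by a function of $|\Qslv|$ of the form appearing in $\boundOnMulti$. Because $u_F$ precedes $u_B$, reading $u$ forward the $S_F$-automata already die within $u_F$, and reading $u^R = u_B^R u_F^R$ the $S_B$-automata already die within $u_B^R$; the two directions therefore do not interfere.

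The remaining work is to reinstate the configuration, i.e.\ BC3 together with the multiplicity and value bounds BC5--BC6. Here I would restrict attention to positions $i$ whose configuration is \emph{recurrent} in the run on $w$ (occurs infinitely often); since there are only $\confNum{\nestedA}$ configurations and each non-recurrent one occurs finitely often, this excludes only finitely many positions, which is exactly the ``almost every position'' in the statement. For a recurrent configuration $C$, the master state $q_m$ lies on a cycle, and I would append to $u$ a word $x$ (a segment of $w$ between two occurrences of $C$) after which the master returns to $q_m$ and the active forward states are again exactly $S_F$. The backward component of the configuration at the right end of $u$ is automatically $S_B$, independent of $|u|$, because the portion of $w'$ to the right of $u$ is a verbatim copy of $w[i+1,\infty]$ and the backward automata reach the right end of $u$ in the same states in which they reach $i$ in $w$. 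The extra factor $\confNum{\nestedA}$ in $\boundOnMulti$ accounts for this regeneration segment.

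Finally I would check BC5 and BC6. BC5 holds because $u$ terminates all old copies and the regeneration segment reproduces the state set $S_F \cup S_B$ without pushing the number of automata in any state beyond its value at $i$ in $w$. BC6 is where the choice of $\fsum^+$ is used: since $\fsum^+$ accumulates only nonnegative increments, an automaton that terminates inside the bounded word $u$ accumulates no more than the same automaton running on the longer stretch of $w$ it would otherwise traverse, so the value bounds follow by selecting, among the bounded terminating words guaranteed above, ones of minimal accumulated weight. The main obstacle I anticipate is precisely this simultaneous control: arranging one bounded word to absorb forward-moving automata entering from the left and backward-moving automata entering from the right, while also regenerating the configuration (in particular producing the fresh forward automata in exactly the states $S_F$) and not increasing any partial sum. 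The device of splitting $u$ into a forward-killing prefix and a backward-killing suffix, combined with configuration recurrence to pin down both the regeneration and the length bound, is what I expect to make the three requirements compatible.
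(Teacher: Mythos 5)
Your high-level skeleton matches the paper's: the barrier is a forward-killing block followed by a backward-killing block, both carved out of $w$ itself; finite width supplies their existence; a recurrence argument supplies ``almost every position''; and a pumping argument supplies the length bound BC4. The gaps are in how the pieces are glued and how the pumping is done, and both would make the construction fail. First, the gluing: you choose $u_F$ (a prefix of $w[i+1,\infty]$), $u_B$ (a suffix of $w[1,i]$) and the regeneration segment $x$ independently, with no condition on the positions at which they are cut. But the master automaton is deterministic and invokes a slave automaton at \emph{every} transition, so while reading $u_F u_B x$ it spawns fresh slave automata; the configuration at the two ends of $u$ consists exactly of these spawned automata (plus the master state), and BC3/BC5 require it to reproduce the configuration at $i$ with no larger multiplicities. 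This forces the pieces of $u$ to be segments of $w$ delimited by positions with equal \emph{profiles}, i.e., configurations together with multiplicities capped at the BC4 bound: the paper picks $a<i<b$ with equal profiles such that every automaton active at $i$ terminates inside $w[a,b]$, and takes $u$ to be a copy of $w(i,b]$ followed by a copy of $w(a,i]$, so the entire run over $u$ mirrors the run over $w$. Your appended $x$ cannot repair the missing alignment, because after $u_F u_B$ the master is in an uncontrolled state and will not traverse $x$ as it does in $w$; and recurrence of \emph{configurations} is too weak for BC5, since a different occurrence of the same configuration may carry strictly larger multiplicities in some state (the paper derives $\mathsf{mult}'(q)\leq \mathsf{mult}(q)$ from capped-multiplicity equality plus the fact that automata active at an end of $u$ but invoked inside $u$ number fewer than $|u|$).

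Second, the length bound. Excising a block between two positions where merely the \emph{set} of not-yet-terminated states repeats is unsound here: besides ignoring the master state and the freshly spawned slaves (destroying BC3/BC5 again), it reroutes each surviving automaton onto the continuation of whichever automaton occupied the same state at the later position. Nonnegativity of $\fsum^+$ does not rescue BC6 under such rerouting --- the continuation an automaton is switched onto can be strictly more expensive than its own original continuation, so ``shorter word'' does not imply ``smaller value for each automaton,'' and your appeal to ``selecting words of minimal accumulated weight'' has no set over which minimality helps. This is exactly why the paper tracks \emph{extended configurations} (the configuration together with the equivalence relation recording which active automata merge by the far end of the word), permits excision only between positions whose induced transformation on equivalence classes is the \emph{identity} permutation, and forbids excising across the termination positions $j_1,\ldots,j_n$ of the automata active at $i$. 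Under these restrictions a deletion only shortens each run, every transition taken after deletion is a transition of the original run, and BC6 survives; the pigeonhole is then over extended configurations times permutations, which is what yields the specific exponential bound in BC4. Without this machinery, or a substitute for it, your construction gives BC1 and BC2 but not BC3--BC6.
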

\begin{proof}
Let $w$ be a word. We consider the unique run of $\nestedA$ on $w$ and refer to the positions 
in $w$ and the corresponding positions in the run, i.e., we refer to ``the configuration at $i$ in  $w$'' as the unique configuration 
of $\nestedA$ at $i$ while processing word $w$.

We define the \emph{profile} at position $j$ in $w$ as a pair of   
the configuration at $j$ and multiplicities at $j$ bounded by $\boundOnMulti$, defined for every 
$q \in Q_s$ as $min(\mult(q), \boundOnMulti)$.
Let $c_0 < c_1$ positions in $w$  such that
every profile that occurs infinitely often in $w$ occurs between $c_0$ and $c_1$. 
Next, we define $c_2$ as the minimal position past $c_1$ such that every backward-walking slave
automaton invoked past $c_2$ terminates at some position past $c_1$, i.e., any backward-walking slave automaton invoked 
past  $c_2$ terminates before it reaches $c_1$. The NWA $\nestedA$ has finite width and hence such $c_2$ exists.   
We show that there exists a barrier in $w$ for every position $i > c_2$.

\Paragraph{Construction of a barrier $u$ at position $i$}.
Let $i > c_2$. We pick positions $a < i < b$ such that all
slave automata active at $i$ terminate within $w[a,b]$ and the profiles at positions $a,i,b$ and letters in $w$ are the same. 
Since $i > c_2 > c_1 > c_0$ such $a,b$ exist. 
Observe that $w[a,b]$ satisfies first three conditions of the barrier definition, but it does not have to satisfy the remaining conditions. 
Consider word $w[b,i]w[a+1,i]$. It satisfies all barrier conditions except for BC4.
We take $w[b,i]w[a+1,i]$ and transform it into a barrier $u$ by removing certain subwords corresponding to cycles in $\nestedA$, i.e.,
subwords such that at the beginning and at the end of this subword $\nestedA$ is in the same configuration.
Removal of such subwords does not change runs of the master automaton or slave automata in the suffix of the word.
However, to show condition RC5 we need to ensure that the removal operation does not change the profile, i.e., 
the profiles at positions $i$ and $i+|u|$ in $w[1,i] u w[i+1, \infty]$ are the same as the profile at $i$ in $w$.

We define an \emph{extended configuration} in a finite word $x$ at position $p$ as the pair of
the configuration at $p$ and the equivalence relation $R_p$  on states of slave automata active at position $p$
such that $q_1 R_p q_2$ if and only if either
\begin{compactitem}
\item $q_1, q_2$ are states of forward-walking slave automata  and 
slave automata in states $q_1$ and $q_2$ at position $p$ reach the same 
state at the end of word $x$, or
\item $q_1, q_2$ are states of backward-walking slave automata and 
slave automata in states $q_1$ and $q_2$ at position $p$ reach the same 
state at the beginning of word $x$.
\end{compactitem}
The slave automata that do not reach the end o word $x$ (resp., the beginning of $x$) are in the same equivalence class $\emptyset$.
Given two positions $p < p'$ in $x$ with the same extended configuration, we define
\emph{transformation from $p$ to $p'$} as
a function  from  the set of equivalence classes of $R_p$ (which is equal $R_{p'}$) into itself such that
\begin{compactitem}
\item for a state $q$ of a forward-walking slave automaton, the 
equivalence class $[q]$ is transformed into a class $[q']$ if some slave automaton in state $q$ at position $p$
reaches state from $[q']$ at position $p'$, and
\item for a state $q$ of a backward-walking slave automaton, the 
equivalence class $[q]$ is transformed into a class $[q']$ if some slave automaton in state $q'$ at position $p'$
reaches state from $[q]$ at position $p$. 
\end{compactitem}
Due to determinicity of $\nestedA$, the transformation is, indeed, a function and a permutation.

Now we describe the subword removal process. First, we mark 
positions $j_1, \ldots, j_n$  in $w$ at which each of
slave automata active at position $i$ in $w$ terminates. 
Observe that these belong to the interval $[a,b]$. 
Now, starting with word $w[b,i]w[a+1,i]$ we iteratively remove subwords $y$ such that
(a)~the extended configurations at the first and the last position of $y$ are the same,
(b)~the transformation between these positions  is the identity, and
(c)~$y$ does not contain any  position corresponding to positions $\{j_1, \ldots, j_{n}\} $.
The last word, from which no such a subword can be removed is $u$. We show that $u$ is a barrier.

\Paragraph{Word $u$ is a barrier}.
The positions at which slave automata are terminated are not removed from $w[a,b]$  and hence $u$
satisfies conditions BC1, BC2.
Removal of a word satisfying (a) and (b) preserves profile at every step and hence condition BC3 holds for $u$.

Condition BC4 follows from the fact that there are at most $\confNum{\nestedA} \cdot |Q_s|^{|Q_s|}$ 
different extended configurations. Transformations are permutations of equivalence classes, i.e., they are permutations of
sets of size at most  $|Q_s|$. Therefore, if $k > |Q_s|^{|Q_s|}$, then
 among positions $p_1 < \ldots < p_k$ with the same extended configuration there exists a pair such that 
the transformation between these positions  is the identity. 
Thus, words of length at least $\confNum{\nestedA} \cdot |Q_s|^{2 \cdot |Q_s|}$ contain a subword $y$ satisfying 
(a) and (b). Furthermore, words of length at least $(|Q_s + 2|) \cdot \confNum{\nestedA} \cdot |Q_s|^{2 \cdot |Q_s|} = 
{\boundOnMulti}$, contain a subword $y$ satisfying (a), (b) and (c). Therefore, the length of $v$ is bounded by ${\boundOnMulti}$.

We show that $u$ satisfies BC5.
Let $q$ be a state of some slave automaton and let $mult_1(q)$ be the multiplicity of $q$ in $w$ at position $i$
and $mult_2(q), mult_3(q)$ be multiplicities of $q$ in $w[1,i] u w[i+1, \infty]$ at positions $i$ and $i+|u|$ respectively.
 Observe that (a) and (b) imply the 
the profiles at positions $i$ and $i+|u|$ in $w[1,i] u w[i+1, \infty]$ are the same as the profile at $i$ in $w$.
It follows that $min(mult_1(q), N) = min(mult_2(q), N)  = min(mult_3(q), N)$. 
Now, if $q$ is a state of a backward-walking slave automaton, then all such automata active at position $i$
in  $w[1,i] u w[i+1, \infty]$ have been invoked in $v$ and hence $mult_2(q) < |u| \leq {\boundOnMulti}$.
It follows that  $mult_2(q) \leq mult_1(q)$. 
Otherwise, similarly if $q$ is a state of a forward-walking slave automaton we have $mult_3(q) < N$ and
$mult_3(q) \leq mult_1(q)$. Therefore, condition BC5 holds.

Finally, word $u$ satisfies condition BC6.
Consider a backward-walking slave automaton $\slaveA$ active at position $i$ in $w$.
This automaton terminates before position $a$ and hence it accumulates equal values on  subwords  
$w[a,i]$ and $w[1,i]w[b,i]w[a+1,i]$. Now, word $u$ results from  $w[b,i]w[a+1,i]$ by deletion of certain subwords, while it preserves positions corresponding to
termination of slave automata. Moreover, the deletion process only shortens runs; the transitions taken by slave automata correspond to the transitions in the original run.
Thus, $\slaveA$
accumulates over $w[1,i]u$ a value smaller or equal to the value accumulated over $w[1,i]w[b,i]w[a+1,i]$.
The case of forward-walking slave automata is symmetric.
\end{proof}

Now, we show the key property of barriers, i.e., they decrease the partial sum of values if inserted at a position where 
exponentially many slave automata accumulate exponential values. This enables us to reduce the emptiness problem for finite-width NWA to
the bounded-width case. 

\begin{lemma}
\label{l:barrierDecreasesPartialSum}
Let $\nestedA$ be a deterministic $(\flimavg;\fsum^+)$-automaton with bidirectional slave automata.
Let $C$ be the maximal weight of slave automata of $\nestedA$.
Let $w$ be a word and let $u$ be a barrier for $w$ at position $i$.
If more than $2 \cdot |\nestedA| \cdot \boundOnMulti$ slave automata accumulate
the value exceeding $4 \cdot C \cdot \boundOnMulti$ past $i$ in $w$ (i.e, for backward-walking slave automata it is at $w[1,i]$), 
then for almost all $K$, the sum of values returned by slave automata invoked up to $k$ in $w$ is greater than
the sum of values of  returned by slave automata invoked up to $k+|u|$ in $w[1,i]uw[i+1,\infty]$. 
\end{lemma}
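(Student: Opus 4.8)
The plan is to track how the finite collection of slave-automaton return values changes when the barrier $u$ is spliced in at position $i$, and to show this net change is negative. Write $S_w(k)$ and $S_{w'}(k')$ for the sums of the values returned by slave automata invoked up to $k$ in $w$ and up to $k'$ in $w'=w[1,i]\,u\,w[i+1,\infty]$, and set $\Delta(k)=S_{w'}(k+|u|)-S_w(k)$. First I would observe that, since $\nestedA$ is deterministic (Lemma~\ref{WLOGDeterministicLimAvgSum}) and has finite width, only finitely many slave automata have their value affected by the splice: these are the forward automata invoked at positions $\le i$ that are still running at $i$ (call them \emph{type A}), the backward automata still running at $i$ (invoked past $i$; \emph{type B}), and the automata newly invoked inside $u$ (\emph{type C}). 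By finite width the sets of type A and type B automata are finite, and every affected automaton is invoked at a position bounded independently of $k$; all other automata are invoked at equal relative positions in $w$ and $w'$ and return identical values. Hence $\Delta(k)$ is constant for all large $k$, and it suffices to prove this constant $\Delta$ is negative.

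Next I would decompose $\Delta$ by localising each returned value into its prefix-, $u$-, and suffix-part, relative to the common suffix $S=w[i+1,\infty]=w'[i+|u|+1,\infty]$ and prefix $w[1,i]$. The prefix-parts coincide in $w$ and $w'$ and cancel, so $\Delta$ splits as a suffix net plus a prefix net plus a non-negative $u$-part. For the suffix net, the forward automata that actually enter $S$ in $w'$ are exactly those active at $i+|u|$; by BC2 these were all invoked inside $u$, so their total multiplicity is at most $|u|\le\boundOnMulti$, while by BC5 their per-state multiplicity is bounded by the corresponding multiplicity at $i$ in $w$. Thus the suffix value recovered in $w'$ is, state by state, dominated by the suffix value lost from the cut type A automata. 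Letting $n_A$ (resp.\ $n_B$) be the number of high-value crossers, i.e.\ affected forward (resp.\ backward) automata returning more than $4C\boundOnMulti$ past $i$, the hypothesis reads $n_A+n_B>2|\nestedA|\boundOnMulti$; on the high-value forward states the lost multiplicity totals $n_A$ while the recovered multiplicity is $<\boundOnMulti$, so the suffix net falls below $-4C\boundOnMulti(n_A-\boundOnMulti)$, and symmetrically the prefix net falls below $-4C\boundOnMulti(n_B-\boundOnMulti)$.

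It remains to bound the positive $u$-part, and this is the crux. Each affected automaton reads at most $|u|\le\boundOnMulti$ letters of $u$ and so contributes at most $C\boundOnMulti$ there; the type C automata therefore contribute at most $C\boundOnMulti^2$ in total, and the high-value crossers at most $(n_A+n_B)\,C\boundOnMulti$. The dangerous terms are the within-$u$ weights of the possibly many low-value crossers, which a crude count cannot control; here I would invoke BC6, which guarantees that each crossing automaton accumulates no more over $u$ than it would have over $S$ (resp.\ over the prefix). This lets me charge each low-value crosser's $u$-contribution against its own, already-subtracted, suffix or prefix loss, so those terms cancel up to an additive $O(C\boundOnMulti^2)$ coming from the $<\boundOnMulti$ automata that survive to the suffix boundary. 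Collecting the suffix net, prefix net, and $u$-part gives $\Delta<-3C\boundOnMulti(n_A+n_B)+O(C\boundOnMulti^2)$, and the hypothesis $n_A+n_B>2|\nestedA|\boundOnMulti$ makes the leading negative term dominate, so $\Delta<0$ and therefore $S_w(k)>S_{w'}(k+|u|)$ for almost all $k$.

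I expect the main obstacle to be precisely the $u$-part estimate of the last paragraph: the automata invoked inside the barrier may run arbitrarily far into $S$, so their values cannot be bounded individually. The key is never to bound these values directly, but to control only the \emph{multiplicities} of automata crossing each boundary, using BC5 together with the absolute bound $\le\boundOnMulti$ that BC2 and BC4 impose on the boundary multiplicities in $w'$, and to absorb the low-value crossers' in-barrier weight into their suffix/prefix loss via BC6. Making the constants supplied by these two tools line up with the hypothesis thresholds $2|\nestedA|\boundOnMulti$ and $4C\boundOnMulti$ is the delicate bookkeeping that the argument ultimately turns on.
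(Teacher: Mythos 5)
Your proposal is correct and takes essentially the same route as the paper's own proof: both fix the comparison past the first position after which no further backward-walking crossers of $i$ are invoked, decompose the difference of partial sums into the crossing values lost, the escape values of automata invoked inside $u$ (dominated state-by-state via BC3 and BC5), and the within-barrier weights (controlled via BC4 and BC6), and close with the same split of states at the threshold $4 \cdot C \cdot \boundOnMulti$. Your per-automaton ``net flow'' bookkeeping, including charging low-value crossers' in-barrier weight against their own crossing loss and absorbing the residual escape terms into an $O(C \cdot \boundOnMulti^2)$ slack, is a repackaging of the paper's four-component estimate (1'), (2'a), (2'b), (2'c) with identical use of the barrier conditions.
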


\begin{proof}
Let $k$ be the first position past $i$ such that all backward-walking slave automata invoked past $k$ terminate before position $i$.
We show that the partial sum of values returned by slave automata invoked up to $k$ in $w$ is greater than
the partial sum of values of  returned by slave automata invoked up to $k+|u|$ in $w[1,i]uw[i+1,\infty]$. This argument works for every $k_0 \geq k$. 
The partial sum of values returned by slave automata invoked up to $k$ in $w$ consists of the sum of weights:
(1)~accumulated by slave automata before they reach position $i$, and
(2)~accumulated by slave automata after they reach position $i$.	
In (1) we include values of backward-walking (resp., forward-walking) slave automata that do not reach position $i$.
Let $A$ be the set of states of slave automata active at position $i$ in $w$. 
For $q \in A$, we define $\mult(q)$ (resp., $val(q)$) as the multiplicity at $i$ (resp., the value accumulated past $i$) in $w$ by slave automata that are in the state $q$ at $i$.
Observe that (2) = $\sum_{q \in A} mult(q) \cdot 	val(q)$. 

The partial sum of values returned by slave automata invoked up to $k+|u|$ in $w$ consists of four components, which are the sum of weights 
(1')~aggregated by slave automata before they reach any of positions $i, \ldots i+|u|$, 
(2'a)~aggregated over positions  $i, \ldots i+|u|$ by backward-walking slave automata invoked past $i+|u|$ and forward-walking slave automata invoked before $i$, 
(2'b)~aggregated past positions  $i, \ldots i+|u|$ by slave automata invoked on positions $i, \ldots i+|u|$, i.e.,
the values aggregated by backward-walking slave automata over $w'[1,i] = w[1,i]$ and 
forward-walking slave automata over $w'[i+|u|+1, \infty] = w[i+1,\infty]$, and
(2'c)~aggregated over positions  $i, \ldots i+|u|$ by slave automata invoked on positions $i, \ldots i+|u|$.

We observe that (1)$=$(1') and we show that (2) $>$ (2'a)+(2'b)+(2'c).
Observe that (2'c) is bounded by $C \cdot |u|^2$. In (2'a), the multiplicities of slave automata are the same as the multiplicities of 
the corresponding slave automata active at $i$ in $w$ while the values they accumulate are bounded by the minimum of 
the values accumulated within $w$ and $C \cdot |u|$. Indeed, consider a backward-walking slave automaton, which is in state $q$ at position $i+|u|$ in $w'$.
The multiplicity of such automata is equal to  the multiplicity of slave automata that are in state $q$ at position $i$ in $w$.
Moreover, due to condition BC6 satisfied by $u$, the value which such an automaton accumulates along $w'[1,i+|u|]$ is bounded by the value it accumulates at $w[1,i]$.
Also, such an automaton terminates within $|u|$ and hence its value is bounded by $C \cdot |u|$ as well.
The similar reasoning holds for forward-walking slave automata active at position $i$ in $w'$ and hence
(2'a) is bounded by $\sum_{a \in A} \mult(q) \min(C\cdot |u|, val(q))$.
In (2'b), slave automata accumulate the same values as the corresponding slave automata in $w$, but multiplicities are bounded by $|u|$ and the values of the corresponding
slave automata at position $i$ in $w$.
	Indeed, consider a backward-walking slave automaton, which is in state $q$ at position $i$ in $w'$.
Since the profile at $i$ in $w$ and $w'$ is the same, the multiplicity $\mult'(q)$ of such automata is bounded by the multiplicity of slave automata that are in state $q$ at position $i$ in $w$.
Moreover, all backward-walking slave automata active at position $i$ in $w'$ have been invoked within positions $i, \ldots, i + |u|$ and hence
the sum of their multiplicities is bounded by $|u|$.  
Since $w[1,i] = w'[1,i]$, a backward-walking slave automaton in state $q$ at position $i$ in $w'$ accumulates value $val(q)$ at past position $i$, i.e, 
the value which accumulates the same automaton in $w$.
Similar estimates hold for forward-walking slave automata past position $i+|u|$.
Therefore, (2'b) is bounded by $\sum_{q \in A} mult'(q)  val(q)$, where $\sum_{q \in A} mult'(q) \leq |u|$.
Due to condition BC5 of barriers, for every $q \in A$ we have $\mult'(q) \leq mult(q)$.
Now, we estimate $(2) - (2'a) - (2'b) - (2'	c)$, which equals
$\sum_{q \in A} (\mult(q) \cdot (val(q) -  \min(C\cdot |u|, val(q))) - \sum_{q \in A} mult'(q)  val(q) ) - C \cdot |u|^2$.
We partition $A$ into $A_1$,
the states of slave automata at position $i$, which accumulate the value at least $4 \cdot C \cdot \boundOnMulti$ past position $i$ in $w$, and $A_2$ the reaming states from $A$.
Then,
 (2) - (2'a)+(2'b)+(2'c) $\geq \sum_{q \in A_1} (\mult(q) - \mult'(q)) \cdot (val(q) - C\cdot |u|) -  \sum_{q\in A_1} \mult'(q) C \cdot |u|  - \sum_{q \in A_2} \mult'(q)  4 \cdot C \cdot \boundOnMulti - C \cdot |u|^2$.
Since $\sum_{q \in A} mult'(q) \leq |u|$ and  $|u| \leq \boundOnMulti$, we get
 (2) - (2'a)+(2'b)+(2'c) $ \geq \sum_{q \in A_1} (\mult(q) - \mult'(q)) \cdot (val(q) - C\cdot |u|) - 5 \cdot C \cdot \boundOnMulti^2$.
  Recall that $\sum_{q \in A_1} \mult(q) \geq 2 \cdot \boundOnMulti$ and for every $q \in A_1$ we have $val(q) \geq 4 \cdot C \cdot \boundOnMulti$. 
Therefore,  $\sum_{q \in A_1} (\mult(q) - \mult'(q)) \cdot (val(q) - C\cdot |u|) > 8 \cdot C \cdot \boundOnMulti^2$ and hence (2) $>$ (2'a)+(2'b)+(2'c).
This concludes the proof that insertion of a barrier decreases the partial sum.

\end{proof}

Using the above lemma we can reduce the emptiness problem for finite-width $(\flimavg;\fsum^+)$-automata with bidirectional slave automata
to the bounded-width case.

\begin{lemma}
\label{l:reductionFiniteWidthToBoundedWidth}
Let $\nestedA$ be a deterministic $(\flimavg; \fsum^+)$-automaton of finite width with 
bidirectional slave automata.
There exists a deterministic $(\flimavg; \fsum^+)$-automaton $\nestedA'$ over an extended alphabet $\Sigma_1$ 
with bidirectional slave automata of width bounded exponentially in $|\nestedA|$,
such that 
the emptiness problems for $\nestedA$ and $\nestedA'$ coincide, i.e.,
$\inf_{w \in \Sigma^{\omega}} \valueL{\nestedA}(w) = \inf_{w \in \Sigma_1^{\omega}} \valueL{\nestedA'}(w)$.
The size of $\nestedA'$ is $O(\boundOnMulti \cdot |\nestedA|)$.
\end{lemma}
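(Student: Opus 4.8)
The plan is to read this lemma as the engineering step that packages Lemmas~\ref{WLOGDeterministicLimAvgSum}, \ref{l:barriersExist}, and~\ref{l:barrierDecreasesPartialSum} into a single reduction. By Lemma~\ref{WLOGDeterministicLimAvgSum} I may assume $\nestedA$ is deterministic over an alphabet of the form $\Sigma\times\Gamma_0$. I then fix the two exponential thresholds that drive the construction: the value threshold $B = 4\cdot C\cdot\boundOnMulti$ from Lemma~\ref{l:barrierDecreasesPartialSum} (with $C$ the maximal slave weight), and a width bound $B'$ that is polynomial in $\boundOnMulti$ and $|\Qslv|$, hence exponential in $|\nestedA|$. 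The automaton $\nestedA'$ works over a further extended alphabet $\Sigma_1 = \Sigma\times\Gamma$, where the extra component encodes, at each position, the nondeterministic choices needed to compress slaves, exactly in the style of the slave-elimination for regular value functions in Theorem~\ref{th:regularBF} and of the determinization in Lemma~\ref{WLOGDeterministicLimAvgSum}(ii); reading these choices keeps $\nestedA'$ deterministic. Each slave of $\nestedA'$ simulates the corresponding slave of $\nestedA$ while tracking its running $\fsum^+$ value capped at $B$: a slave whose value stays below $B$ runs to completion and returns its exact value, whereas a slave reaching value $B$ guesses its remaining value, terminates, and has this guess verified against the encoded transitions. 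Finally I let the master carry a counter of the currently active slaves and reject any run in which more than $B'$ slaves are simultaneously active, so that $\nestedA'$ has width at most $B'$ by construction.

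The next step is the easy inequality $\inf_{w'}\valueL{\nestedA'}(w') \ge \inf_{w}\valueL{\nestedA}(w)$. This follows by decoding: every accepting width-$\le B'$ run of $\nestedA'$ projects on its $\Sigma$-component to an accepting run of $\nestedA$, and the capped value tracking together with the verified guesses reproduce exactly the values returned by the slaves of $\nestedA$, so the two runs have the same value.

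The hard direction is $\inf_{w'}\valueL{\nestedA'}(w') \le \inf_{w}\valueL{\nestedA}(w)$, where, given any word $w$ of finite value, I must exhibit an annotated $w'$ on which $\nestedA'$ admits a width-$\le B'$ run of value at most $\valueL{\nestedA}(w)$. Here I invoke Lemma~\ref{l:barriersExist} to get a barrier at almost every position and insert a barrier at each position across which more than $2\cdot|\nestedA|\cdot\boundOnMulti$ slaves have accumulated value exceeding $B$; by Lemma~\ref{l:barrierDecreasesPartialSum} each such insertion strictly decreases the relevant partial sums, so performing all these (possibly infinitely many) insertions does not raise the limit-infimum of partial averages, and afterwards at every position at most $2\cdot|\nestedA|\cdot\boundOnMulti$ slaves carry value above $B$. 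For the slaves whose value never exceeds $B$ (which barriers do not thin out), I use the simple-run normalization of Lemma~\ref{WLOGDeterministicLimAvgSum}: slaves of a common automaton sharing a state remain synchronized, so at each position the active low-value slaves fall into at most $|\Qslv|\cdot(B+1)$ classes indexed by current state and remaining value, and collapsing each class to one verifying representative (while attributing the guessed return value to each invocation position) bounds their number as well. Combining the two bounds yields the desired width-$\le B'$ run on $w'$.

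The size estimate $O(\boundOnMulti\cdot|\nestedA|)$ is then immediate: each slave of $\nestedA'$ only augments a slave of $\nestedA$ by a value counter ranging over $[0,B]$ with $B = O(\boundOnMulti)$, and the master only augments $\masterA$ by the width counter. Since $B'$ is exponential in $|\nestedA|$, feeding $\nestedA'$ into Theorem~\ref{th:BoundedBF} gives the $\EXPSPACE$ upper bound of Theorem~\ref{th:FiniteWidth}. I expect the reverse inequality to be the main obstacle: one must control the limit-average value while inserting barriers at infinitely many positions, so the per-position decrease of partial sums granted by Lemma~\ref{l:barrierDecreasesPartialSum} has to be shown to survive in the limit infimum, and one must argue that the low-value slaves are independently bounded by the synchronization/compression argument, uniformly for both forward- and backward-walking slave automata whose runs interleave at each inserted barrier.
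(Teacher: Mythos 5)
Your overall skeleton does match the paper's proof: assume determinism, fix the value threshold $4\cdot C\cdot\boundOnMulti$ and the count threshold $2\cdot|\nestedA|\cdot\boundOnMulti$, build a width-bounded $\nestedA'$ over an annotated alphabet by compressing some slave runs into guessed-and-verified values, obtain the easy inequality by simulation, and obtain the hard inequality by iterated barrier insertion via Lemmas~\ref{l:barriersExist} and~\ref{l:barrierDecreasesPartialSum}. However, your construction of $\nestedA'$ is \emph{inverted} relative to the paper's, and the inversion is fatal. You keep the low-value slaves (value staying below $B=4\cdot C\cdot\boundOnMulti$) running to completion and compress the high-value ones: ``a slave reaching value $B$ guesses its remaining value, terminates, and has this guess verified against the encoded transitions.'' But the remaining value of a slave that has already accumulated $B$ is unbounded, so the guess ranges over an infinite set, and no finite-state master (nor any finite-alphabet annotation) can verify it: verification would amount to summing the weights of an unboundedly long virtual run and comparing against an unbounded number. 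The guess-and-verify technique you import from Theorem~\ref{th:regularBF} works precisely because the value functions there have \emph{finite range}; for $\fsum^+$ it applies only to slaves whose total value is bounded. The paper therefore does the opposite: a slave of $\nestedA'$ keeps running exactly as long as it can still accumulate value exceeding the threshold (and condition (*), which barrier insertion establishes, guarantees at most $2\cdot|\nestedA|\cdot\boundOnMulti$ such slaves are active, the master rejecting otherwise), while every slave run whose remaining value is below the threshold --- a finite range --- is cut to a single transition whose weight is encoded in the alphabet and checked by the exponentially blown-up master.

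The inversion is also what forces your second patch, collapsing the active low-value slaves into classes with ``one verifying representative,'' because keeping low-value slaves running gives no width bound at all: a slave can run arbitrarily long while accumulating value $0$, and in a finite-width NWA the number of simultaneously active slaves, though finite at each position, is not uniformly bounded. That collapsing is not a legal NWA operation --- the semantics requires each invocation to have its own run whose value is the sum of its \emph{own} transitions, so an invocation terminated at a merge point returns only its past partial sum, not past-plus-shared-future. Repairing this again requires guessing the representative's remaining value (bounded by $B$, so finitely many possibilities) and having the master verify it, i.e., re-deriving exactly the paper's compression of bounded-value slave runs, at which point the running representative is superfluous. Reorganize the construction as in the paper --- compress every slave from the moment its remaining value drops below the threshold, keep only the provably few high-value slaves alive, and let the master count those --- and then your easy direction, your barrier-insertion argument for the hard direction (including the observation that the inserted barriers only decrease partial sums, so the limit word's value does not increase), and the size estimate go through essentially as you wrote them.
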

\begin{proof}
We define an exponential size $(\flimavg; \fsum^+)$-automaton $\nestedA'$ with 
 bidirectional slave automata of width bounded by exponentially in $|\nestedA|$ such that $\nestedA'$
  simulates a subset of runes of  $\nestedA$ which satisfy the following condition (*):
  at almost every position $s$,
among slave automata active at $s$, at most $2 \cdot |\nestedA| \cdot \boundOnMulti$ will accumulate value greater than $4 \cdot C \cdot \boundOnMulti$.
Next, we show that for every run of $\nestedA$ there exists a run satisfying (*) of at most the same value. It follows
that the emptiness problems for $\nestedA$ and $\nestedA'$ coincide.

\Paragraph{Definition of $\nestedA$}. 
The alphabet $\Sigma_1$ consists of $\Sigma$ and additional marking letters described below.
First, the automaton $\nestedA'$ invokes only dummy slave automata before the initiation marking. 
Past initial marking the master automaton keeps track of the number of invoked slave automata and rejects
if the number of slave automata exceeds $2 \cdot |\nestedA| \cdot \boundOnMulti$.
Second, we modify each slave automaton of $\nestedA$ so that they work only as long as they can accumulate the value
exceeding $4\cdot C \cdot \boundOnMulti$, where $C$ is the maximal weight among all slave automata of $\nestedA$.
In particular, slave automata, which accumulate the value below  $4\cdot C \cdot \boundOnMulti$, are invoked for a single transition only.
We encode in the alphabet whether a slave automaton is invoked for a single transition and the weight of this transition.
Moreover, we encode in the alphabet that slave automata in state $q$  ought to terminate as in the following run they
accumulate the value below  $4\cdot C \cdot \boundOnMulti$. The master automaton checks whether the external markings are correct. 
These constructions involve a single exponential blow up of the master automaton, slave automata and the alphabet.
Observe that accepting runs of $\nestedA'$ correspond to accepting runs of $\nestedA$, which satisfy condition (*).
The values of the corresponding runs are the same. Now, we need to show that while computing the infimum over all accepting runs of $\nestedA$, we can restrict ourselves to runs satisfying (*).

\Paragraph{The emptiness problems for $\nestedA$ and $\nestedA'$ coincide}.
Since $\nestedA'$ simulates a subset of runs of $\nestedA$ we have 
$\inf_{w \in \Sigma^{\omega}} \valueL{\nestedA}(w) \leq \inf_{w \in \Sigma_1^{\omega}} \valueL{\nestedA'}(w)$.
We show how to transform any word $w$ of $\nestedA$ into a word whose unique run satisfies (*), which means that is can be simulated by $\nestedA'$, and whose value does not exceed the value of $w$.
It follows that $\inf_{w \in \Sigma^{\omega}} \valueL{\nestedA}(w) \geq \inf_{w \in \Sigma_1^{\omega}} \valueL{\nestedA'}(w)$.

Let $w$ be a word accepted by $\nestedA$.
Let $s$ be a position such that on every position $p\geq s$ in $w$ a barrier exists (Lemma~\ref{l:barriersExist}).
We start with $i=0$, word $w_0 = w$ and $p_0 = s$. We iteratively at step $i$, pick first position $p > p_i$ at which more than $2 \cdot |\nestedA| \cdot \boundOnMulti$ slave automata accumulate
the values exceeding $4 \cdot C \cdot \boundOnMulti$ and we insert a barrier $u_i$ at position $p$.
Barrier $u_i$  exists as a barrier for $w$ at the corresponding position. 
Then, we start the next iteration $i+1$ with word $w_{i+1} = w_i[1,p] u_i w_i[p+1, \infty]$ and $p_{i+1} = p +|u_i|$.
Observe that  iterations past $i$ change only positions past $p_i$ in words $w_i, w_{i+1}, \ldots$, i.e., all words $w_i, w_{i+1}, \ldots$ share the prefix $w[1,p_i]$.
Thus, there exists the limit word $w_B$, which is the result of the iterative process.  
 We argue that the resulting word $w_B$ satisfies (*)
and its value does not exceed the value of $w$.
First, we show that $w_B$ satisfies (*).

Let $u_i$ be a barrier at position $p$ in $w_i$ and let $w' = w_i[1,p] u_i w_i[p+1, \infty]$.
The forward-walking slave automata active at $p$ in $w_i$ are terminated within $u_i$ in $w'$ and hence accumulate the value below $|u_i| < C \cdot N$ in $w'$, where $C$ is the maximal weight of slave automata of $\nestedA$. 
The number of forward-walking slave automata active between positions $p$ and $p+|u_i|$ which are not terminated within $u$ is bounded 
by $|u_i| < \boundOnMulti$. Therefore, at every position between $i$ and $p+|u|$ in $w'$, 
at most $\boundOnMulti$ forward-walking slave automata accumulate the value exceeding $4\cdot C \cdot \boundOnMulti$.
The similar argument applies to backward-walking slave automata, which shows that at positions $p$ to $p+|u|$ condition (*) is satisfied. 

It remains to comment on the value of $w_B$. Barriers inserted in the iterative process satisfy conditions of Lemma~\ref{l:barrierDecreasesPartialSum}, and hence
the partial sums decrease from $w_0$ to $w_1, w_2$ and so on. It follows partial sums in $w_B$ are bounded by partial sums in every word $w_0, w_1, \ldots$ and hence 
the value of $w_B$ does not exceed the value of $w$. 
\end{proof}

\Paragraph{Algorithm}.
Let $\nestedA$ be a non-deterministic finite-width $(\flimavg;\fsum^+)$-automaton with bidirectional slave automata.
Lemma~\ref{WLOGDeterministicLimAvgSum} reduces the emptiness problem of $\nestedA$ to
the emptiness problem of $\nestedA_d$, which has the same properties as $\nestedA$, but it is deterministic.
The reduction takes exponential time and produces an exponential-size automaton, while it does not change the number of configurations. 
Therefore, 
the value of $\boundOnMulti$ for $\nestedA$ and $\nestedA_d$ is the same.
Next, Lemma~\ref{l:reductionFiniteWidthToBoundedWidth} reduces the emptiness problem for $\nestedA_d$ to
the emptiness problem of $\nestedA_B$ which is a deterministic  $(\flimavg;\fsum^+)$-automaton with bidirectional slave automata and
the width of $\nestedA_B$ is linear in $\boundOnMulti$, i.e., it is exponential in $|\nestedA|$. 
The size of $\nestedA_B$ is $O(\boundOnMulti \cdot |\nestedA_d|)$, i.e., it is exponential in $|\nestedA|$.
Therefore, the emptiness problem for $\nestedA_B$ can be solved in polynomial space in $\boundOnMulti \cdot |\nestedA_B|$ (Theorem~\ref{th:BoundedBF}) and 
in the exponential space in the size of $\nestedA$.

\subsection{The proof of Theorem~\ref{th:expressivness}}

In the finite-automata framework, the pumping lemma is a standard tool to show inexpressibility results. 
It is difficult to state a pumping lemma for NWA as their state space is infinite.
While there is finitely many configurations of NWA, the multiplicity of running slave automata is unbounded.
We consider the graph of configurations of an NWA to reduce the infinite-space case to the finite-state case. 

Recall that we assume that slave automata do not have outgoing transitions from accepting states. We also assume (without loss of generality) that initial states of slave automata 
do not have ingoing transitions.

\Paragraph{Graph of configurations}.
Let $\nestedA$ be an NWA with bidirectional slave automata over an alphabet $\Sigma$. 
We define the graph of configurations $G$ of an NWA $\nestedA$ as a $\Sigma$-labeled graph whose nodes are configurations of $\nestedA$.
For $a \in \Sigma$, there exists an $a$-labeled edge from configuration $(q_m, A_1)$ to $(q_m', A_2)$ if and only if 
\begin{compactitem}
\item the master automaton of $\nestedA$ has a transition $(q_m,a,q_m')$ invoking a slave automaton $\slaveA$ in an initial state $q_I$,
\item for $i=1,2$, states $A_i$  are partitioned into states of backward-walking slave automata $A_i^B$ and forward-walking slave automata $A_i^F$,
\item there exists a function $h_F$, which transforms all non-final states from $A_1^F$ onto $A_2^F \setminus \{q_I\}$ such that
for every $q \in dom(h_F)$ tuple $(q,a,h(q))$ is  a transition of some slave automaton,
\item there exists a function $h_B$, which transforms all non-final states from $A_2^B$ onto $A_1^B \setminus \{q_I\}$ such that
for every $q \in dom(h_B)$ tuple $(q,a,h(q))$ is  a transition of some slave automaton, and
\item if $\slaveA$ is forward-walking $q_I \in A_2^F$; otherwise $q_I \in A_1^B$.
\end{compactitem}

Paths in $G$ are related to simple runs of $\nestedA$.
Consider a simple run of $\nestedA$. Its sequence of configurations is an infinite path in $G$. 
Conversely, given a path $\pi$ in $G$ starting in the initial configuration $(q_I, \emptyset)$, we can specify regular conditions ensuring
(a)~existence of a simple run corresponding to $\pi$ and 
(b)~existence of a simple accepting run corresponding to $\pi$.
Regularity of these conditions follows from the fact that unweighted parts of runs of $\nestedA$ can be simulated by an alternating \buchi{} automaton.
This automaton checks whether runs of all slave automata are finite; this suffices to ensure that a path corresponds to a valid simple run. 
It can also check acceptance condition, i.e., whether runs of all slave automata terminate in accepting states and the master automaton visits some accepting state infinitely often. 

The graph of configurations of $\nestedA$ enables us to construct accepting runs of $\nestedA$ with desired properties. 
Having an accepting run of $\nestedA$ with a sequence of configurations $\alpha \beta \gamma$ 
such that $\beta$ is a cycle in the graph of configurations of $\nestedA$, we know that for every $n>0$ there exists an accepting run of $\nestedA$ whose sequence of configurations is 
$\alpha \beta^n \gamma$. This is a key observation used in the following lemma.

\IncomparableForwardAndBackward*
\begin{proof}
\newcommand{\confSeq}[1]{\sigma[#1]}
\newcommand{\bigO}{\mathcal{O}}
We show (i) in detail and next we comment on (ii). 
Suppose that $\nestedA$ is a $(\flimavg;\fsum^+)$-automaton with forward-walking slave automata which computes DCP (Example~\ref{ex:data-consistency}) restricted to the alphabet $\{r,\#,c\}$.
First, we show that in the graph of configurations of $\nestedA$ there exist two cycles $\tau_r, \tau_\#$ such that
$\tau_r$ is an $r$-labeled cycle in which at least one (non-dummy) slave automaton is invoked, and
$\tau_{\#}$ is an $\#$-labeled cycle in which $\nestedA$ takes only silent transitions, i.e., it invokes only slave automata that immediately accept returning no value. 
Next, we use $\tau_r, \tau_{\#}$ to construct of a run 
$\pi_0$ on some word $u$ such that the value of $\pi_0$ is smaller than DCP of $u$, which contradicts 
the assumption that $\nestedA$ computes DCP.

\Paragraph{Existence of $\tau_r$ and $\tau_{\#}$}.
Let $K$ be greater than the number of configurations of $\nestedA$ and let $N > 15 K^2$.
To simplify the calculations, we denote by $\bigO(2K)$ some natural number from interval $[0,2K]$. 
For example, we write $N + \frac{3K-1}{2} = N + \bigO(2K)$.
\smallskip

Consider a word $w = (c \#^N  r^{2K} c \#^{2N} r^K )^{\omega}$. DCP of $w$ is  $\frac{4}{3}\cdot N+ \bigO(2K)$. 
Let $\pi$ be a run of $\nestedA$ on $w$ of the value $\frac{4}{3}\cdot N + \bigO(2K)$. 
We can assume that $\pi$ is simple (Lemma~\ref{WLOGDeterministicLimAvgSum}).
In every block $\#^{2K}$ in $w$, there exist positions $i_1 < i_2$ such that
the configurations in $\pi$ at $i_1$ and $i_2$ are the same and $i_2 - i_1 > K$.	We remove these parts of $\pi$.
The resulting sequence $\pi'$ is a run of $\nestedA$ on some word 
$w' = c \#^N  r^{L_1} c \#^{2N} r^K c \#^N  r^{L_2} c \#^{2N} r^K \ldots$ such that $L_1, L_2, \ldots$ are at most $K-1$.
Observe that DCP of $w'$ is at least $\frac{3}{2}\cdot N$ and hence the value of $\pi'$ is at least $\frac{3}{2}\cdot N$.
However, the partial sums of the values returned by slave automata in $\pi'$ are bounded by
the corresponding partial sums in $\pi$. Therefore, the value of $\pi'$ increases due to the fact that the removed parts of $\pi$
contain invocations of slave automata returning small values and removal of these parts of $\pi$ increase partial averages.
It follows that infinitely often at least one (non-dummy) slave automaton is invoked over the block $r^{2K}$.
Consider the sequence of configurations $\confSeq{\pi}$ of run $\pi$. 
There exists infinitely many subsequences $\tau$ of $\confSeq{\pi}$, which correspond to 
transitions over letters $r$ and satisfy: (A1)~the first and the last configuration of $\tau$
is the same, (A2)~along $\tau$ at least one slave automaton is invoked and (A3)~the length of $\tau$ is bounded by $K$.
Such a sequence corresponds to an $r$-labeled cycle in the graph of configurations of $\nestedA$ of length at most $K$.
There are finitely many such cycles and hence there exists a cycle $\tau_{r}$ satisfying condition (A1), (A2) and (A3)
which occurs infinitely often in  $\confSeq{\pi}$.
 
In a similar we show that  $\confSeq{\pi}$ contains infinitely often a subsequence $\tau_{\#}$, which corresponds to transitions over letters $\#$,
 such that 
(B1)~the first and the last configuration of $\tau_{\#}$ is the same, 
(B2)~slave automata invoked along $\tau_{\#}$ return no value (correspond to silent transitions), and 
(B3)~the length of $\tau_{\#}$ is bounded by $K$.
To see that, we divide runs $\pi$ and $\pi'$ into blocks separated by letter $c$.
In transformation from $\pi$ to $\pi'$,
the average number of invocations of (non-dummy) slave automata per block decreases by at most $\frac{2K}{3}$.
Yet, the value of $\pi'$ increases by at least $\frac{1}{6}N - \bigO(2K)$ w.r.t. the value of $\pi$. 
Therefore, the average number of invoked slave automata per block
cannot exceed $9 K$ in $\pi$. 
It follows that  at most $14 \cdot K$ non-dummy slave automata are invoked on average in a block of $N$ letters $\#$.
Thus, there exists infinitely many occurrences of subsequences of $\confSeq{\pi}$  satisfying (B1), (B2) and (B3), and hence
there exists a $\#$-labeled cycle $\tau_{\#}$ satisfying conditions (B1), (B2) and (B3), which occurs infinitely often in $\confSeq{\pi}$.

Observe that there exist infinitely many subwords $c \#^N  r^{2K} c \#^{2N} r^K$ of $w$ such that in the corresponding positions in $\confSeq{\pi}$ occur both 
 $\tau_{\#}$ and $\tau_{r}$. Thus, there exists a path $\alpha_A$
in the graph of configurations of $\nestedA$ such that $\alpha_A$ leads from the last configuration of $\tau_{\#}$ to the first configuration of $\tau_r$ over letters $\#, r$.
Moreover, all slave automata in $\pi$ terminate after finite number of steps, while $\tau_{\#}$ and $\tau_r$ occur infinitely often. Therefore,
there exists a path $\alpha_B$  from the first configuration of $\tau_{r}$ to the last configuration of $\tau_{\#}$ over letters $\#,r,c$ such that 
 (C1)~at least one transition is over letter $c$, and 
 (C2)~all slave automata active at the first configuration of $\alpha_B$ are terminated before the end of $\alpha_B$,
(C3)~the master automaton of $\nestedA$ visits an accepting state within $\alpha_B$.
Let $u_A$ (resp., $u_B$) be a subword of $w$ at which configurations of $\pi$ form the sequence $\alpha_A$ (resp., $\alpha_B$).  
Next, we show the construction of $\pi_0$ using $\tau_r, \tau_{\#}$, $\alpha_A$ and $\alpha_B$.
\smallskip

\Paragraph{The construction of $\pi_0$}.
Let $M,L$ be natural numbers, which we fix later. 
We define $\pi_0$ as some simple accepting run that corresponds to the sequence of configurations $\alpha_0 ((\tau_{\#})^L \alpha_A (\tau_r)^M \alpha_B)^{\omega}$, 
where $\alpha_0$ is a sequence of configurations from an initial configuration to the first configuration of $\tau_{\#}$. 
Such a run exists as we can ensure that at positions corresponding to $\alpha_B$ all slave automata terminate in accepting states and the master automaton visits an accepting state.
Let $u_0$ be a word at which there exists a run with the sequence of configurations $\alpha_0$.
We define $u = u_0 (\#^{L\cdot |\tau_{\#}|} u_A r^{M \cdot |\tau_r|} u_B )^{\omega}$. 
The run $\pi_0$ is an accepting run on $u$.
Observe that DCP of $u$ exceeds $|\tau_{\#}| \cdot L$.
However, we show that the value of $\pi_0$ is smaller. 
Run $\pi_0$ is a lasso and its limit average is the average of the cycle, which corresponds to 
the average of $\alpha_B (\tau_{\#})^L \alpha_A (\tau_r)^M \alpha_B$ excluding values of slave automata invoked in the second occurrence of $\alpha_B$.
The non-dummy slave automata are invoked only in $\alpha_B$ and in  $\alpha_A (\tau_r)^M$. 
All slave automata invoked within this cycle terminate by the end of it, and hence 
(a)~the values of slave automata invoked in $\alpha_B$ are bounded by the length of the cycle multiplied by $C$, the maximal weight of $\nestedA$, i.e.,
$S_1 = C \cdot (|\alpha_B| + |\alpha_A| + L\cdot |\tau_{\#}| + M \cdot |\tau_r|)$, and
(b)~the values of slave automata invoked in $\alpha_A (\tau_r)^M$ are bounded by 
$S_2 = C \cdot (|\alpha_A| + |\alpha_B| + M \cdot |\tau_r|)$.
We have $S_1 > S_2$, however there are at most $|\alpha_B|$  slave automata invoked in $\alpha_B$, which accumulate value at most $S_1$.
The remaining slave automata are invoked in $\alpha_A (\tau_r)^M$ and there are at least $M$ of them.
Thus, the average value of the cycle is at most $ \frac{S_1 \cdot |\alpha_B| + S_2 \cdot M}{|\alpha_B| + M}$.
Now, for $M = 2 \cdot C \cdot |\alpha_B| \cdot |\tau_{\#}|$, we have
$\frac{S_1 \cdot |\alpha_B| }{|\alpha_B| + M} < |\alpha_B| + |\alpha_A| + \frac{L}{2} + C\cdot |\tau_r|\cdot |\alpha_B|$ and 
$\frac{S_2 \cdot M}{|\alpha_B| + M} < S_2$.
Let $L > 2 \cdot (S_2 +   |\alpha_B| + |\alpha_A| + C\cdot |\tau_r|\cdot |\alpha_B|)$, then the average of the cycle, which is bounded by 
$\frac{S_1 \cdot |\alpha_B|+ S_2 \cdot M}{M+ |\alpha_B|} $, is smaller than $L$. 
However, DCP of $u$ exceeds $L$, which contradicts the fact that $\nestedA$ computes DCP.

\Paragraph{Backward-walking slave automata}.
The proof for backward-walking slave automata is similar. 
We consider numbers $K,N$ and a word $w = (c w^{2K} \#^N  c w^K \#^{2N})^{\omega}$; we show that there exist cycles 
$\tau_w, \tau_{\#}'$ in $\nestedA$, with similar properties to $\tau_r, \tau_{\#}$ from the forward case.
Moreover, there exist sequences of configurations $\alpha_A'$ from the last configuration of $\tau_w$ to the first configuration of $\tau_{\#}'$, and
$\alpha_B'$ from the last configuration of $\tau_{\#}'$ to the first configuration of $\tau_{w}$, with the properties similar to (C1), (C2) and (C3).
To show (C2) we use the fact that $\tau_w$ and $\tau_{\#}'$ occur infinitely often and $\nestedA$ has \textbf{finite-width}, and hence for every position 
$i$ there exists position $j>i$ such that every (backward-walking) slave automaton active at position $j$ terminates before $i$ (i.e., at some position within $[i,j]$).
Next, we construct  from $\tau_w,\tau_{\#}',\alpha_A', \alpha_B'$ a run $\pi_0'$ of the value lower than DCP of the corresponding word.
The construction is virtually the same as in the forward case.
\end{proof} 
 
\section{Proofs from Section~\ref{s:bounded}}

\TechnicalBoundedWidth*
\begin{proof}
\Paragraph{(1)}: Assume that (*) is satisfied. 
Consider an accepting run  $\pi$ with configuration $\cycle[1]$ occurring infinitely often.
Let $i$ be a position at which configuration $\cycle[1]$ occurs. 
Let $i' < i$ be the last position at which any automaton from $Fc$ is invoked. 
Consider the run resulting from inserting cycle $\cycle$ repeated $N$ times at position $i$ in $\pi$.
The only slave automata active past position $i + |\cycle|$, which has been invoked before $i'$ are the automata from $Fc$.
Therefore, the partial sum of values returned by slave automata up to position $i'$ decreases by at least 
$(N-1) \cdot \gain(\cycle, Fc) + C < - (N-1) +C$, where $C$ is the value of forward slave automata invoked before $i'$, which terminate in $\cycle$.	
The number of slave automata invoked before $i'$ does not change and hence 
by picking $N$ large enough we can decease the partial average up to $i'$ arbitrarily.
We can apply such a pumping step at every position with configuration $\cycle[1]$ obtaining a run whose limit infimum of partial averages diverges to $-\infty$.
	 
\Paragraph{(2)}: $(\Rightarrow)$: 
Assume that there exists a cycle $\cycle$ in the graph of configurations of $\nestedA$ and 
a restriction $R$ such that $\AvgE(\cycle, R)$ and 
there exists an accepting run $\pi$ with configuration $\cycle[1]$ occurring infinitely often. 
Let $i$ be a position at which configuration $\cycle[1]$ occurs.
We insert $\cycle$ at position $i$ and obtain run $\pi'$. 
Let $i'$ be the last position in $\pi'$ such that $i' \geq i + |\cycle|$ and all automata active at position $i$, which are not in $R$, are invoked before $i'$. 
Due to presence of backward-walking slave automata $i'$ can be strictly grater than $i+ |\cycle|$.
Consider the run resulting from inserting cycle $\cycle$ repeated $N$ times at position $i$ in $\pi'$.
Then, the partial average up to position $i' + N |\cycle|$ is given by the expression
$\frac{a + N\cdot p - \Delta}{b + N \cdot q}$, where 
\begin{compactitem}
\item $a$ is the partial sum of values returned by slave automata invoked up to $i'$ in run $\pi$,
\item $b$ is the number of slave automata invoked up to $i'$ in run $\pi$,
\item $\frac{p}{q} = \AvgE(\cycle, R)$ and $q$ is the number of slave automata invoked in $\cycle$, and
\item  $\Delta$ is the value accumulated by backward-walking slave automata invoked past $i'+N |\cycle|$, which terminate  
within interval $[i + N |\cycle|, i + (N+1) |\cycle|]$.
\end{compactitem}
Now, by taking $N$ large enough we can bring the partial average arbitrarily close to $\frac{a}{b}$.
Using that and simple iteration, we can construct an accepting run of the value $\AvgE(\cycle, R)$.

$(\Leftarrow)$: Assume that $\nestedA$ does not satisfy (*) from (1). 
It follows that in every accepting run of $\nestedA$ for almost every position $i$,
 slave automata invoked before $i$, accumulate past $i$ the value exceeding value $D$
defined as $-C \cdot 	k^2 \cdot \conf{\nestedA}$, where 
$C$ is the maximal weight occurring in $\nestedA$ and $\conf{\nestedA}$ 
is the number of configurations of $\nestedA$. 
Indeed, if there exists such a position $i$, there exists a cycle past $i$ which we can pump to lower the sum of value returned by slave automata invoked before $i$.
Hence, existence of infinitely many such positions $i$, implies that condition (*) holds.

Let $\pi$ be an accepting run of $\nestedA$ of value $\lambda$.
Consider $\epsilon > 0$. There exists a prefix of $\pi$ up to position $i$ such that 
the partial average of values returned by slave automata up to $i$ is at most $\lambda + \epsilon$, 
the sum of values accumulated by slave automata invoked before $i$ exceeds $D$ and
 and the number of slave automata before $i$ exceeds $\frac{\epsilon}{-D}$.
Then, the partial average of the values accumulated by slave automata invoked before $i$ within positions $1, \ldots, i$
is at most $\lambda + 2\cdot \epsilon$.

Now, we can decompose the prefix up to $i$ into simple cycles one by one, i.e., having a prefix $\tau$ of $\pi$ up to position $i$, 
we pick a simple cycle, remove it from $\tau$ and repeat the process. We terminate when we end up with run $\tau_E$ which has no simple cycle to remove;
the remaining run $\tau_E$ has length bounded by the number of configurations and therefore its sum of values is greater than $D$.
Thus, the partial average of the values accumulated by slave automata invoked before $i$ within positions $1, \ldots, i$ equals (a)~
the weighted average of average weights of simple cycles excluding backward-walking slave automata invoked past $i$,
 plus (b)~the average of $\tau_E$ bounded by $D \cdot \frac{\epsilon}{D}$.
It follows that there exist a simple cycle $\cycle$ and a restriction $R$ such that $\AvgE(\cycle, R) \leq \lambda + 3 \cdot \epsilon$; 
otherwise the weighted average in (a) exceeds  $\lambda + 3 \cdot \epsilon$, which contradicts the choice of prefix of $\pi$.

However, there are infinitely many $\epsilon >0$, while there are finitely many simple cycles. Therefore, 
there exist a simple cycle $\cycle$ and a restriction $R$ such that $\AvgE(\cycle, R) \leq \lambda$. 
\end{proof}

\BoundedWidthForwardAndBackward*
\begin{proof}
It suffices to show that conditions from Lemma~\ref{l:techlical-bounded} can be checked (a)~in logarithmic space for constant $k$ and unary weights, 
(b)~polynomial time for constant $k$ and binary weights, 
 and (c)~polynomial space for $k$ given in unary.
Observe that these conditions reduce to weighted reachability, which can be computed in logarithmic space in the size of the graph of $k$-configurations of $\nestedA$, provided that weights can fit in logarithmic space.
Otherwise, if weights are represented in binary and are of length greater than logarithmic in the size of the graph, weighted reachability can be implemented using Dijkstra algorithm in polynomial time.
The size of the graph of $k$-configurations is polynomial in the size of $\nestedA$ and exponential in $k$. Thus,
 the graph of $k$-configurations of $\nestedA$ is polynomial if $k$ is constant, and exponential if $k$ is given in unary. 
 Finally, we comment how to compute the successor relation. 

We define a presuccessor relation $R$ on $k$-configurations as follows. 
We have $(q; q_1, \ldots, q_k) R (q'; q_1', \ldots, q_k')$ if and only if for some $a \in \Sigma$
the master automaton of $\nestedA$ has a transition $(q,a,q')$ invoking a slave automaton $\slaveA$ in an initial state $q_I$, and 
for every component $j \in \{1, \ldots, k\}$ one of the following holds
\begin{compactitem}
\item $q_j$ is a non-final state of a forward-walking slave automaton and $(q_j,a,q_j')$ is a transition of this automaton,
\item $q_j$ is a final state of a forward-walking slave automaton, and $q_j' = \bot$ or $q_j' = q_I$,
\item $q_j'$ is a non-final state of a backward-walking slave automaton and $(q_j',a,q_j)$ is a transition of this automaton,
\item $q_j'$ is a final state of a backward-walking slave automaton, and $q_j = \bot$ or $q_j = q_I$,
\end{compactitem}
if $\slaveA$ is forward-walking (resp., backward-walking) slave automaton, then for exactly one component $j$ we have $q_I  = q_j'$ (resp., $q_I = q_j$).
Observe that $R$ encodes a local consistency of transitions of the master and slave automata. A sequence of $k$-configurations consistent with $R$ satisfying
the following conditions (a) and (b) corresponds to an accepting simple run. These conditions are: 
(a)~the master automaton visits one of its accepting states infinitely often, and (b)~every slave automaton terminates after finitely many steps.
Now observe that the successor relation defined in Section~\ref{s:bounded} is the presuccessor relation restricted to $k$-configurations $C$, which are 
(1)~reachable through $R$ from the initial configuration and (2)~a cycle w.r.t. $R$ satisfying conditions (a) and (b) is reachable through $R$ from $C$. 
Indeed, for such configurations $C_1, C_2$ satisfying $C_1 R C_2$ there exists a sequence of $k$-configurations consistent with $R$, which corresponds to an accepting run.
It follows that the successor relation in the graph of $k$-configurations can be computed based on reachability w.r.t. presuccessor relation, which is computable in logarithmic space.
 \end{proof}

\end{document}